\documentclass{article}




\usepackage[final,nonatbib]{neurips_2022}


\usepackage[utf8]{inputenc} 
\usepackage[T1]{fontenc}    
\usepackage{url}            
\usepackage{booktabs}       
\usepackage{amsfonts}       
\usepackage{nicefrac}       
\usepackage{microtype}      
\usepackage{xcolor}         

\usepackage{listings}
\usepackage{amsmath}
\usepackage{amsthm}
\usepackage{tikz}
\usepackage{caption}
\usepackage{array}
\usepackage{mdwmath}
\usepackage{multirow}
\usepackage{mdwtab}
\usepackage{eqparbox}
\usepackage{multicol}
\usepackage{amsfonts}
\usepackage{tikz}
\usepackage{multirow,bigstrut,threeparttable}
\usepackage{amsthm}
\usepackage{array}
\usepackage{bbm}
\usepackage{subfigure}
\usepackage{epstopdf}
\usepackage{mdwmath}
\usepackage{mdwtab}
\usepackage{eqparbox}
\usetikzlibrary{topaths,calc}
\usepackage{latexsym}
\usepackage{cite}
\usepackage{amssymb}
\usepackage{bm}
\usepackage{amssymb}
\usepackage{graphicx}
\usepackage{mathrsfs}
\usepackage{epsfig}
\usepackage{psfrag}
\usepackage{setspace}
\usepackage[
            CJKbookmarks=true,
            bookmarksnumbered=true,
            bookmarksopen=true,
            colorlinks=true,
            citecolor=red,
            linkcolor=blue,
            anchorcolor=red,
            urlcolor=blue
            ]{hyperref}
\usepackage[ruled]{algorithm2e}
\usepackage{algpseudocode}
\usepackage{stfloats}

\usepackage{mathtools}
\usepackage{blkarray}
\usepackage{multirow,bigdelim,dcolumn,booktabs}

\usepackage{xparse}
\usepackage{tikz}
\usetikzlibrary{calc}
\usetikzlibrary{decorations.pathreplacing,matrix,positioning}

\usepackage[T1]{fontenc}
\usepackage[utf8]{inputenc}
\usepackage{mathtools}
\usepackage{blkarray, bigstrut}
\usepackage{gauss}
\usepackage{svg}

\theoremstyle{plain}
\newtheorem{theorem}{Theorem}

\newtheorem{lemma}{Lemma}

\def \bP {\mathbb{P}}
\def \bE {\mathbb{E}}

\def \var {\mathsf{Var}}
\def\1{\mathbbm{1}}

\usepackage{xspace}

\newcommand{\floor}[1]{{\left\lfloor {#1} \right \rfloor}}
\newcommand{\ceil}[1]{{\left\lceil {#1} \right \rceil}}



\definecolor{myblue}{rgb}{.8, .8, 1}
\definecolor{mathblue}{rgb}{0.2472, 0.24, 0.6} 
\definecolor{mathred}{rgb}{0.6, 0.24, 0.442893}
\definecolor{mathyellow}{rgb}{0.6, 0.547014, 0.24}

\usepackage{cleveref}
\crefname{lemma}{Lemma}{Lemmas}
\Crefname{lemma}{Lemma}{Lemmas}
\crefname{thm}{Theorem}{Theorems}
\Crefname{thm}{Theorem}{Theorems}
\crefformat{equation}{(#2#1#3)}

\title{Leveraging The Hints: Adaptive \\Bidding in Repeated First-Price Auctions}

%

\author{%
Wei Zhang$^{1}$ \quad Yanjun Han$^{2*}$ \quad Zhengyuan Zhou$^{3,4*}$ \quad Aaron Flores$^5$ \quad Tsachy Weissman$^6$\\
$^1$MIT EECS \quad $^2$MIT IDSS \quad $^3$Arena Technologies \quad $^4$NYU Stern \quad $^5$Yahoo! Research \quad $^6$Stanford EE\\
\texttt{\{w\_zhang, yjhan\}@mit.edu}\quad \texttt{z@arena-ai.com} \quad  \texttt{aaron.flores@yahooinc.com}\quad\texttt{tsachy@stanford.edu} 
}


\begin{document}
\def\thefootnote{*}\footnotetext{Correspondence to Yanjun Han (yjhan@mit.edu) and Zhengyuan Zhou (z@arena-ai.com).}

\maketitle

\begin{abstract}
With the advent and increasing consolidation of e-commerce, digital advertising has
very recently replaced traditional advertising as the main marketing force in the economy. In
the past four years, a particularly important development in the digital advertising industry is the
shift from second-price auctions to first-price auctions for online display ads. This shift immediately
motivated the intellectually challenging question of how to bid in first-price auctions, because unlike
in second-price auctions, bidding one's private value truthfully is no longer optimal. Following a series of recent works in this area, we consider a differentiated setup: we do not make any assumption about other bidders' maximum bid (i.e. it can be adversarial over time), and instead assume that we have access to a hint that serves as a prediction of other bidders' maximum bid, where the prediction is learned through some blackbox machine learning model. We consider two types of hints: a single point-prediction, and a hint interval (representing a type of confidence region into which others' maximum bid falls). We establish minimax near-optimal regret bounds for both cases and highlight the quantitatively different behavior between them. We also provide improved regret bounds when the others' maximum bid exhibits the further structure of sparsity. Finally, we complement the theoretical results with demonstrations using real bidding data.
\end{abstract}

\section{Introduction}

As e-commerce proliferates across the industries, digital advertising has become the predominant marketing force in the economy: in 2019, businesses in the US alone \cite{news1} have spent more than 129 billion dollars on digital advertising, surpassing for the first time the combined amount spent via traditional advertising channels by 20 billion dollars. Since then, 
this number has been growing and continues to outpace traditional advertising spending~\cite{news1}. 
Within digital advertising, the key step that generates such revenue for digital advertising is online ads auctions.
In the past, second-price auctions, as a result of its truthful nature, have been the standard format for online ads auctions~\cite{lucking2000vickrey,klemperer2004auctions,lucking2007pennies}. 
However, the industry has recently witnessed a shift from second-price auctions to first-price auctions in display ads auctions, which currently account for 54\% of the digital advertising market\footnote{Search auctions dominate the remaining market, which are still conducted using second-price (or generalize second-price) auctions.} share~\cite{despotakis2019first}.   

This industry-wide shift to first-price auctions has occurred for several reasons, including larger revenue (the exchange charges a percentage of the winning bid)~\cite{news2} and no last-look advantage for exchanges: under second-price auctions, an ad exchange can examine all the submitted bids and \textit{then} raise the floor price to above the second-highest bid and obtain a larger\footnote{At the extreme, raising the floor price to barely under the highest bid effectively turns it into a first-price auction without the bidders being aware.} revenue~\cite{news3}.

Motivated by these considerations, several ad exchanges, including AppNexus (now Xandr), Index Exchange and OpenX, started to roll out first-price auctions in 2017 and completed the transition by 2018~\cite{Exchange, news7}.
Google Ad Manager (previously Adx) followed suit and also completed the move to first-price auctions at the end of 2019~\cite{Google2} and incorporated additional transparency\footnote{Google was under sustained criticism of leveraging last-look advantage in second-price auctions. This is likely an effort to offset the previous negative image, although there was no such mention in Google's official language.} in their new first-price auction platform: bidders would be able to see the minimum-bid-to-win (i.e. full information) after each auction on Google Ad Manager, whereas in many other ad exchanges, a bidder only knows whether he/she wins the bid (i.e. binary feedback)~\cite{Google2}.  Situated in this background, an important question arises: how should a bidder adaptively bid in repeated online first-price auctions to maximize the cumulative payoffs? 

Prior to the shift, bidding is straightforward: the optimal bidding strategy is simply truthfully bid one's private value (regardless of what the other bidders do). However, this truthful property no longer holds in first-price auctions. As such, bidding in first-price auctions--and the various inference/learning problems arising from it--quickly become complicated. In response, an online decision-making approach has emerged recently, where a bidder at each auction $t$ needs to decide the amount $b_t$ to bid with a given valuation $v_t$, whereas others' maximum bid $m_t$ is either assumed to be iid drawn from a distribution (unrelated to anything else) or fully adversarial. More specifically, ~\cite{NEURIPS2019_6aadca7b} (and its follow-up journal version~\cite{balseiro2021contextual}) studied the binary feedback setting and show that: 1) if $m_t$ is drawn iid from an underlying distribution (with a generic CDF), then one achieves the minimax optimal regret of $\widetilde{\Theta}(T^{\frac{2}{3}})$; 2) if $m_t$ is adversarial, then one achieves the minimax optimal regret of $\widetilde{\Theta}(T^{\frac{3}{4}})$. Subsequently, \cite{han2020optimal} considered the winning-bid only feedback (i.e. a bidder can observe the winning bid) and established that if $m_t$ is drawn iid from an underlying distribution (with a generic CDF), one can achieve the minimax optimal regret of $\widetilde{\Theta}(T^{\frac{1}{2}})$. While it remains unknown what the result would be when $m_t$ is adversarial under winning-bid only feedback, \cite{han2020learning} studied the full-information feedback setting and showed that the minimax optimal regret of $\widetilde{\Theta}(T^{\frac{1}{2}})$ can be achieved when $m_t$ is adversarial\footnote{Note that under both full-information feedback and iid $m_t$, a pure exploitation algorithm already achieves the minimax optimal regret $\Theta(\sqrt{T})$.}. \cite{zhang2021meow} also studied the full-information feedback setting where it designed and implemented a space-efficient variant of the algorithm proposed in~\cite{han2020learning} and through empirical evaluations, showed that the algorithmic variant is quite effective. 

In practice, others' highest bid $m_t$ is often neither stochastic nor adversarial, and contextual information is often available to gain some knowledge of $m_t$. We aim to make inroads into this more practical setup by considering a differentiated setup from the existing and growing adaptive-bidding-in-first-price auctions literature: we do not make any assumption about other bidders' maximum bid (i.e. it can be adversarial over time), or model the contexts directly \cite{badanidiyuru2021learning}; instead we assume an access to a hint that serves as a prediction of other bidders' maximum bid, where the prediction is learned through some blackbox machine learning model which could be much more powerful than simple linear models. We consider two types of hints: one where a single point-prediction is available, and the other where a hint interval (representing a type of confidence region into which others' maximum bid falls) is available. We establish minimax optimal regret bounds for both cases and highlight the quantitatively different behavior between the two settings. We also provide improved regret bounds when the others' maximum bid exhibits the further structure of sparsity. Finally, we complement the theoretical results with demonstrations using real bidding data.

\subsection{Additional Application: Personalized Hospitality Pricing}
Another important application that shares similar elements to adaptive bidding in first-price auctions and that hence is also amenable to the methodological framework we develop in this paper is personalized hospitality pricing. In personalized hospitality pricing, a travel distribution platform applies a markup to a given hotel room provided by a supplier (either the hotel itself or some travel aggregator such as Expedia) and presents the final price along with the hotel room whenever a user (either a consumer or a travel agency) searches for a hotel when booking travel. To see the parallel with first-price auctions, the supplier's cost corresponds to the private value, and the final price (which is supplier's cost plus the markup) corresponds to the ``bid". Note that, the user can access many other competing travel distribution platforms, each of which may provide a different price for the same hotel room (type).
Consequently, there is a bidding element because the user will take the lowest price. Further, in this problem, the platform would want to provision personalized markups, where the markup is decided based on search features (destination city, number of nights, days until first check-in), hotel features (ratings, room types) and other generic features (holiday season, time of the year etc). 
Note that as of this writing, although almost all existing markup provisioning schemes are fixed business rules that are handcrafted (many ``if this feature then that markup" logic statements), an adaptive learning approach -- such as the one proposed in this paper, has great applicability in practice. In particular, Arena Technologies\footnote{See https://www.arena-ai.com/ for more information.}, a leading enterprise AI solution provider, provides reinforcement learning enabled personalized hospitality pricing.
Using publicly available travel pricing data, Arena builds price prediction models that serve as hints, which are in turn used in its real-time adaptive personalization engine.
Such a differentiated infrastructure -- both in terms of engineering sophistication and learning flexibility -- makes it easy for the deployment and testing of our (and any future improved) algorithm, thereby broadening its potential impact.

\section{Problem Formulation}
We study the problem of repeated first-price auction with hint as follows. Consider a time horizon with total length $T$, and there is one round of first-price auction taking place at each time. At the beginning of each round, a bidder observes a particular item and has a private value $v_t \in [0, 1]$ for it. Then, based on her past observations of others' bid and $v_t$ for that round, she bids $b_t \in [0, v_t]$ for that item. The bidder wins the round if and only if $b_t$ is larger than others' highest bid, defined as $m_t$. Under the above settings one could write the instantaneous reward at time $t$ for that bidder:
\begin{align}\label{def.reward}
    r(b_t; v_t, m_t) = (v_t - b_t)\cdot \mathbbm{1}\{b_t \ge m_t\}. 
\end{align}
 Define policy $\pi$ as the overall bidding strategy, which is a sequence of bidding prices $(b_1, b_2, \cdots, b_T)$ under corresponding private value sequence $\{v_t\}$ and others' highest bid sequence $\{m_t\}$, and we use $\mathbb{E}[r(b_t; m_t, v_t)]$ to denote the expected reward under policy $\pi$ while the expectation is taken over the randomness inside randomized policy $\pi$. Then we define the regret under policy $\pi$:
\begin{align}\label{def.regret}
    \text{\rm  Reg}(\pi) = \max \limits_{a\in \mathcal{F}_{\text{L-M}}}\sum_{t=1}^T r(a(v_t); m_t, v_t) - \sum_{t=1}^T\mathbb{E}[r(b_t; m_t, v_t)], 
\end{align}
where $a$ denotes a bidding oracle - a map from private values $v_t$ to bidding prices $b_t$, and $a(v_t)$ is the bidding price under oracle $a$.\footnote{In the following we may use $r_{t, a}$ to abbreviate $r(a(v_t); m_t, v_t)$.} Here $\mathcal{F}_{\text{L-M}}$ is the set of oracles we compete with, which is the set of all 1-Lipschitz and increasing functions from $[0,1]\rightarrow [0,1]$.

The above settings are similar to that in \cite{han2020optimal}\cite{han2020learning}, and in this work, we include additional information provided to the bidder at each round. The goal is to analyze how the performance of hints may influence the regret bounds in theory. We consider two forms of hint, both of which contain a point estimate $h_t$ of the minimum-bid-to-win $m_t$ at time $t$, and the difference lies in whether a bidder observes a single hint or a hint interval, with the latter defined as a pair $(h_t, \sigma_t)$ satisfying:
\begin{align}\label{def.sigma}
    \mathbb{E}\left[|h_t - m_t|^q\right] \le \sigma_t^q, \quad  t = 1, 2, \cdots, T,
\end{align}
where $q$ measures how accurate the error estimation is, namely, as $q$ becomes larger the bidder is more confident that $h_t$ and $m_t$'s difference is smaller than $\sigma_t$.  If we consider the extreme case of $q \rightarrow \infty$, then $m_t$ is almost surely inside $[h_t - \sigma_t, h_t + \sigma_t]$. Note that \eqref{def.sigma} always holds regardless of whether the learner observes a hint or a hint interval, and the main difference is that $\sigma_t$ is only revealed in the latter scenario. The bidder's goal is to maximize the cumulative reward for the whole time horizon, and equivalently, to minimize the overall regret.
Let $L:= \sum_{t=1}^T \sigma_t$ be the total error of hints, the learner aims to achieve a small regret adaptive to the unknown quantity $L$: the regret is never larger than the no-hint case even for large $L$, but becomes significantly smaller if $L$ is small.

\section{Regret Gap Between Single Hint and Hint Interval}\label{part.one}
In this section we identify two unique features of leveraging the hints in first-price auctions. First, the best way to use hints is different from the ones in the literature of online learning: instead of adding an optimistic term based on the hints in the multiplicative weights algorithm, in first-price auctions we should manually add the hints as a new expert. Second, first-price auctions exhibit a provable gap between the regrets under single hints and hint intervals, a phenomenon which does not occur in the hint literature. 

\subsection{Online Learning with Hints}

There is a rich line of literature related to online learning with hints where one aims to achieve data-dependent regret bounds in terms of the variation in the environment \cite{allenberg2006hannan,hazan2011better,chiang2012online,rakhlin2013online,steinhardt2014adaptivity,wei2018more,bubeck2019improved}, by taking part of the revealed past losses implicitly as the hint. The algorithm to leverage these hints typically falls into the category of optimistic online mirror descent, where the hint is used to form an optimistic estimate of the current loss. A piece of work that explicitly formulated the hint similar to ours is \cite{wei2020taking}, which used an optimistic EXP4 algorithm to achieve a regret bound of $O(\min\{\sqrt{T}, \sqrt{L}T^{1/4}\})$ when the total error $L$ of the loss predictors is known. 

The nature of first-price auction problems, however, is different from the above. The crucial feature is the discontinuity in the reward function~(\ref{def.reward}) when $b_t$ is close to the minimum-bid-to-win $m_t$, which means that even an accurate prediction of $m_t$ does not imply an accurate reward prediction under every bid $b_t$. This distinction not only leads to different optimal regrets, but also results in different algorithms for regret minimization, as well as a curious gap between the optimal regrets under single hints and hint intervals. The concept of the hint interval does not offer additional help over single hints in many other online learning problems, and to the best of our knowledge, is new in the literature.

\subsection{Regret Gap Between a Single Hint and a Hint Interval}\label{sec.thms}
In this section, we show that the above distinctions are already present in a toy example where all private values $v_t$ are the same for $t = 1, 2, \cdots, T$ (say $v_t \equiv 1$). Our first result states that even under this very simple case, there is a strict separation between the regret bounds of knowing single hints and that of knowing hint intervals.

\begin{theorem}\label{thm.1}
For $L\in [1, T]$, $q \in [1, \infty)$, if $v_t \equiv 1$  and the bidder observes a hint interval at each time $t$, the policy $\pi_1$ in Algorithm \ref{algo.thm1} satisfies
\begin{align*}
\text{\rm Reg}(\pi_1) \le C \sqrt{  \log T \cdot T^{\frac{1}{q+1}}\cdot L^{\frac{q}{q+1}}},
\end{align*}
for a numerical constant $C$ independent of $\{m_t, h_t, \sigma_t\}$. Moreover, the following minimax lower bound holds: 
\begin{align}
\mathop{\mathrm{inf}}\limits_{\pi} \mathop{\mathrm{sup}}\limits_{\{m_t, h_t, \sigma_t\}} \text{\rm Reg}(\pi) \ge c\sqrt{ T^{\frac{1}{q+1}}\cdot L^{\frac{q}{q+1}}}. \notag
\end{align}
Here $c>0$ is a numerical constant independent of $(T, L)$, the supremum is taken over all $\{m_t, h_t, \sigma_t\}$ sequences that satisfy~(\ref{def.sigma}), and the infimum is taken over all possible policies $\pi$. 
\end{theorem}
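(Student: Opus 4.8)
The plan is to treat the restricted problem ($v_t\equiv 1$) as a full-information experts problem against the best \emph{constant} bid, and to beat that benchmark by augmenting a fine grid of constant bids with a single \emph{hint expert}, then converting a small cumulative loss into a square-root regret via a first-order (small-loss) Hedge bound. Concretely, since $v_t\equiv 1$ the comparator class $\mathcal F_{\text{L-M}}$ collapses to a single constant $b^\star\in[0,1]$, and (observing $m_t$ after each round) the learner can evaluate the reward $r(b;m_t,1)=(1-b)\mathbbm{1}\{b\ge m_t\}$ of every bid. I would run Hedge over the expert set $\mathcal E=G\cup\{e_\star\}$, where $G=\{0,\epsilon,2\epsilon,\dots,1\}$ is a grid of constant bids with $\epsilon=1/\mathrm{poly}(T)$ and $e_\star$ bids $b_t=h_t+\delta_t$ with per-round margin $\delta_t:=\sigma_t^{q/(q+1)}$ (this is the ``add the hint as an expert'' step), working throughout with the losses $\ell_{t,e}:=(1-m_t)-r(b_e;m_t,1)\in[0,1]$ measured against the dynamic optimum $1-m_t$.

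Two ingredients drive the bound. First, the hint expert has small cumulative loss: with $\xi_t:=m_t-h_t$, a win/lose split gives $\mathbb{E}[\ell_{t,e_\star}]\le \delta_t+\mathbb{E}|\xi_t|+\mathbb{P}[\xi_t>\delta_t]$, where Jensen gives $\mathbb{E}|\xi_t|\le\sigma_t$ and Markov on $|\xi_t|^q$ gives $\mathbb{P}[\xi_t>\delta_t]\le \sigma_t^q/\delta_t^q$; the choice $\delta_t=\sigma_t^{q/(q+1)}$ balances these to $O(\sigma_t^{q/(q+1)})$, and Hölder's inequality with exponents $((q+1)/q,\,q+1)$ yields $\sum_t\sigma_t^{q/(q+1)}\le L^{q/(q+1)}T^{1/(q+1)}=:A$, so $\min_e\sum_t\mathbb{E}[\ell_{t,e}]\le CA$ uniformly over $\{\sigma_t\}$. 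Second, the grid controls the benchmark: rounding $b^\star$ up to the nearest grid point loses at most $\epsilon T$, so the best expert loss $m^\star:=\min_e\sum_t\ell_{t,e}$ satisfies $m^\star\le\Lambda^\star+\epsilon T$, where $\Lambda^\star:=\min_{b}\sum_t\ell_{t,b}$ is the loss of the best fixed bid.

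To combine, note that $\text{Reg}(\pi_1)=(\text{algorithm loss})-\Lambda^\star$. The small-loss Hedge guarantee (obtained via an adaptive learning rate or doubling) gives algorithm loss $\le m^\star+2\sqrt{m^\star\ln|\mathcal E|}+\ln|\mathcal E|$, hence $\text{Reg}(\pi_1)\le (m^\star-\Lambda^\star)+2\sqrt{m^\star\ln|\mathcal E|}+\ln|\mathcal E|$. Bounding $m^\star-\Lambda^\star\le\epsilon T$ in the first term and $m^\star\le CA$ inside the square root, taking expectations (Jensen on the concave $\sqrt{\cdot}$), and choosing $\epsilon$ polynomially small so that $|\mathcal E|=\mathrm{poly}(T)$, $\ln|\mathcal E|=O(\log T)$, and $\epsilon T\le\sqrt A$, yields $\text{Reg}(\pi_1)=O(\sqrt{A\log T})$, as claimed. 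For the matching lower bound I would invoke Yao's principle and a Le Cam two-point construction: set $h_t\equiv 0$ and $\sigma_t\equiv \sigma=L/T$ (so $A=T^{1/(q+1)}L^{q/(q+1)}$) and let $m_t\in\{0,\Delta\}$ be i.i.d.\ with $\mathbb{P}[m_t=\Delta]=\rho$; the constraint $\rho\Delta^q\le\sigma^q$ permits $\Delta=\rho=\sigma^{q/(q+1)}$. Taking two hypotheses $\rho_\pm=\Delta\pm\eta$ flips the optimal fixed bid between $0$ and $\Delta$ at a per-round cost $\Theta(\eta)$, while the total KL between the (policy-independent) full-information observations over $T$ rounds is $O(T\eta^2/\Delta)$; choosing $\eta\asymp\sqrt{\Delta/T}$ makes this $O(1)$, so a constant fraction of rounds must use the wrong bid under one hypothesis, forcing regret $\gtrsim T\eta\asymp\sqrt{T\Delta}=\sqrt{T\sigma^{q/(q+1)}}=\sqrt A$.

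The crux of the upper bound is the observation that one must measure loss against the \emph{dynamic} optimum and then apply a \emph{first-order} Hedge bound: the hint expert only certifies cumulative loss $O(A)$, and it is the small-loss conversion $O(A)\mapsto O(\sqrt{A\log T})$, together with the Hölder step that turns $\sum_t\sigma_t^{q/(q+1)}$ into $A$ for every admissible $\{\sigma_t\}$, that produces the square-root rate distinguishing hint intervals from single hints. For the lower bound the delicate point is the simultaneous tuning of $(\Delta,\rho,\eta)$ so that the moment constraint binds at $\Delta=\sigma^{q/(q+1)}$, the two hypotheses remain indistinguishable across all $T$ full-information observations, and the wrong-bid penalty is exactly $\Theta(\eta)$; reconciling these three requirements is where I expect the main effort to lie.
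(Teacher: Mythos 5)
Your proposal is correct and follows essentially the same route as the paper: the same expert set (a fine grid of constant bids plus a hint expert bidding $h_t+\sigma_t^{q/(q+1)}$), the same Markov--Jensen--H\"older chain bounding that expert's loss against the dynamic optimum $1-m_t$ by $O(T^{1/(q+1)}L^{q/(q+1)})$, and the same two-point Le Cam construction for the lower bound (binary-supported $m_t$ whose optimal bid flips between the two hypotheses, with separation $\eta\asymp\sqrt{\Delta/T}$ tuned against the KL). The only difference is in the machinery converting the hint expert's small loss into the $O(\sqrt{T^{1/(q+1)}L^{q/(q+1)}\log T})$ rate: you invoke an off-the-shelf first-order (small-loss) Hedge bound on losses recentered at the dynamic optimum, whereas the paper recenters rewards at the hint expert and re-derives an adaptive second-order guarantee via a Bernstein/variance telescoping with learning rate $\eta_t\propto\bigl(\sum_{s\le t}\sigma_s^{q/(q+1)}\bigr)^{-1/2}$; both are standard and yield the same bound.
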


\begin{theorem}\label{thm.2}
If $v_t \equiv 1$ and the bidder observes a single hint at each time $t$,  then for every $q \in [1, \infty)$, the policy $\pi_2$ in Algorithm \ref{algo.thm1} with $T$ extra experts bidding $h_t + \{0, 1/T, \cdots, 1-1/T\}$ achieves
\begin{align*}
\text{\rm Reg}(\pi_2) \le C \left(\log T\right)^{\frac{1}{2}} \left( T\cdot L \right)^{\frac{1}{4}},
\end{align*}
for a numerical constant $C$ independent of $\{m_t, h_t\}$. Moreover, the following minimax lower bound holds: 
\begin{align}
    \mathop{\mathrm{inf}}\limits_{\pi} \mathop{\mathrm{sup}}\limits_{\{m_t, h_t\}} \text{\rm Reg}(\pi) \ge c\left( T\cdot L\right)^{\frac{1}{4}}. \notag
\end{align}
Here $c>0$ is a numerical constant independent of $(T, L)$, the supremum is taken over all $\{m_t, h_t\}$ sequences that satisfy~(\ref{def.sigma}), and the infimum is taken over all possible policies $\pi$.
\end{theorem}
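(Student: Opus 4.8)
The plan is to prove the two halves of Theorem~\ref{thm.2} separately, and throughout I will use the fact that $v_t\equiv 1$ collapses the oracle class $\mathcal{F}_{\text{L-M}}$ to a single fixed bid $b^\star\in[0,1]$, so the benchmark is $\mathrm{OPT}=\max_{b}\sum_t (1-b)\,\mathbbm{1}\{b\ge m_t\}$. I will also record the clairvoyant reward $\mathrm{CLV}=\sum_t(1-m_t)$, which dominates both $\mathrm{OPT}$ and any algorithm, so that $\mathrm{Reg}(\pi_2)=\mathrm{OPT}-\sum_t\mathbb{E}[r(b_t)]\le \mathrm{CLV}-\sum_t\mathbb{E}[r(b_t)]$. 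For the \emph{upper bound}, I would run the exponential-weights rule of Algorithm~\ref{algo.thm1} over the expert set consisting of the base uniform grid together with the $T$ hint-shifted experts that bid $h_t+j/T$, $j=0,\dots,T-1$; since the total number of experts is $O(T)$, the exploration factor is only $\log K=O(\log T)$. The crucial ingredient is not the crude $O(\sqrt{T\log T})$ Hedge bound but a \emph{first-order (small-loss)} regret guarantee, with the per-round loss of expert $e$ measured against the clairvoyant optimum, $\ell_{t,e}=(1-m_t)-r_{t,e}\in[0,1]$. This yields, simultaneously for every expert $e$, the estimation bound $\sum_t\mathbb{E}[r(b_t)]\ge \mathrm{reward}(e)-C\sqrt{G_e\,\log T}$, where $G_e=\mathrm{CLV}-\mathrm{reward}(e)$ is the cumulative clairvoyant gap of $e$. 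Consequently $\mathrm{Reg}(\pi_2)\le \min_{e}\big[(\mathrm{OPT}-\mathrm{reward}(e))+C\sqrt{G_e\log T}\big]$, and it remains to exhibit one good expert.

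The heart of the upper bound is therefore an \textbf{approximation lemma}: among the shifted experts there is an offset $\delta$ (a multiple of $1/T$) whose clairvoyant gap satisfies $G_\delta=O(\sqrt{TL})$ and whose reward matches the fixed-bid oracle up to the same order, $\mathrm{OPT}-\mathrm{reward}(\delta)=O(\sqrt{G_\delta\log T})$. To control $G_\delta$ I would split it into the overpayment on won auctions, $\sum_{m_t\le h_t+\delta}(h_t+\delta-m_t)\le T\delta+L$, and the missed-auction mass $\sum_{m_t>h_t+\delta}(1-m_t)$, which I bound in expectation by Markov's inequality using only the first moment $\mathbb{E}|h_t-m_t|\le\sigma_t$ (valid for every $q\ge1$ by Jensen, which is exactly why the single-hint rate carries no $q$ dependence): this gives $\sum_t \mathbb{P}(m_t-h_t>\delta)\le L/\delta$. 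Balancing $T\delta$ against $L/\delta$ at $\delta\asymp\sqrt{L/T}$ produces $G_\delta=O(\sqrt{TL})$, and feeding this into the first-order bound yields $C\sqrt{\sqrt{TL}\,\log T}=C(\log T)^{1/2}(TL)^{1/4}$, the claimed rate. The delicate point — which I expect to be \textbf{the main obstacle} — is the oracle-matching half of the lemma: a single fixed offset tracks the varying hint $h_t$, whereas the oracle commits to one bid $b^\star$, and a naive comparison reintroduces the prohibitive $T\delta+L$ term. The resolution must exploit the cancellation whereby on rounds with $m_t\ll b^\star$ the shifted expert pays $h_t+\delta\approx m_t+\delta<b^\star$ and thus \emph{saves} relative to the oracle's overpayment, so that only the boundary rounds $\{m_t\in(h_t+\delta,b^\star]\}$ — again counted through the Markov estimate $L/\delta$ — contribute to $\mathrm{OPT}-\mathrm{reward}(\delta)$; pairing the oracle's and the expert's win/loss events carefully to make this cancellation rigorous is the technical crux.

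For the \emph{lower bound} I would exhibit a prior over instances $\{m_t,h_t,\sigma_t\}$ satisfying~(\ref{def.sigma}) on which every policy incurs expected regret $\Omega((TL)^{1/4})$. Fixing $h_t$ constant and taking uniform $\sigma_t=\sigma=L/T$, I would randomize $m_t$ at the calibrated scale $\delta^\star\asymp\sqrt{L/T}$ so that the relevant comparison reduces to a full-information prediction problem between two nearby offsets whose per-round reward margin is $\asymp(L/T)^{1/4}$ with an adversarially (Rademacher) chosen sign; since the learner must commit $b_t$ before the round is revealed, the cumulative advantage of one offset over the other behaves like a random walk of magnitude $\asymp(L/T)^{1/4}\sqrt{T}=(TL)^{1/4}$, and a Le~Cam / anticoncentration argument shows no policy can track its sign, forcing regret $\Omega((TL)^{1/4})$. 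A key structural feature is that $\sigma_t$ is \emph{hidden} in the single-hint model, so I would mix instances with differing error scales that are statistically indistinguishable from the observable $(h_t)$ alone; this prevents the per-round offset adaptation that powers Theorem~\ref{thm.1}, and is precisely what caps the single-hint rate at $(TL)^{1/4}$ for every $q$. The delicate step here is the calibration of $\delta^\star$ against the moment budget $L$ so that the induced margin, the random-walk length, and the constraint~(\ref{def.sigma}) are mutually consistent; matching the upper bound up to the $\sqrt{\log T}$ factor then completes the proof.
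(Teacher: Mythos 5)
Your expert set, your estimate $G_\delta\le T\delta+L+L/\delta=O(\sqrt{TL})$ at $\delta\asymp\sqrt{L/T}$ (this is exactly the paper's bound $\sum_t\bE[r_t^*]\le 3\sqrt{LT}$ for the expert bidding $h_t+\sqrt{L/T}$), and your final balancing of $\log(2T)/\eta$ against $\eta\sqrt{TL}$ all match the paper; so does your lower-bound sketch, which is essentially the paper's two-point construction (constant $h_t$, $m_t$ jumping to $h_t+\varepsilon$ with probability $2(\varepsilon\pm\delta)$ where $\varepsilon\asymp\sqrt{L/T}$), including the key observation that $\sigma_t$ must be hidden for the two error patterns to be indistinguishable. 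The genuine gap is the ``oracle-matching half'' of your approximation lemma, which you yourself flag as the unresolved crux: the claim $\mathrm{OPT}-\mathrm{reward}(\delta)=O(\sqrt{G_\delta\log T})$ for $\delta\asymp\sqrt{L/T}$ is \emph{false}. Take $m_t\equiv 1/2$ and $h_t\equiv 1/2-L/T$ (with $L\le T/4$, say), so that \eqref{def.sigma} holds with $\sum_t\sigma_t=L$ and $\mathrm{OPT}=\mathrm{CLV}=T/2$. The expert bidding $h_t+\sqrt{L/T}$ wins every round but overbids by $\sqrt{L/T}-L/T$ each time, so $\mathrm{OPT}-\mathrm{reward}(\delta)=G_\delta=\Theta(\sqrt{TL})$, larger than your target $O((TL)^{1/4}\sqrt{\log T})$ by a factor of order $(TL)^{1/4}$. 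There are no rounds with $m_t\ll b^\star$ here, so the cancellation you invoke never occurs; and even granting the pairing argument, your own Markov count of the boundary rounds gives only $L/\delta=\sqrt{TL}$, again far above what is needed.

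The obstacle is an artifact of asking a single expert to serve both as the comparator and as the controller of the second-order term. The paper's analysis (Lemma~\ref{lemma.proofthm1}, reused in Appendix~\ref{appendix.a2}) decouples the two roles: all rewards are translated by the designated hint expert's reward $r_{t,a^*}$, so the second-order term of multiplicative weights is $\sum_t\eta_t r_t^*\le\sum_t\eta_t\bigl((1-m_t)-r_{t,a^*}\bigr)=O(\eta\,G_{a^*})=O(\eta\sqrt{TL})$, while the comparator in \eqref{eq:regret_final} remains the \emph{best} expert in the pool, which the base grid places within $O(1)$ of $\mathrm{OPT}$; no statement of the form $\mathrm{reward}(a^*)\approx\mathrm{OPT}$ is ever required. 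Your own per-expert bound $\mathrm{Reg}\le\min_e[(\mathrm{OPT}-\mathrm{reward}(e))+C\sqrt{G_e\log T}]$ can also be rescued by letting the witness $e$ depend on the instance: if $\mathrm{CLV}-\mathrm{OPT}\le C'\sqrt{TL}$ take $e$ to be the best grid expert (first term $O(1)$, second term $O((TL)^{1/4}\sqrt{\log T})$), and otherwise take $e$ to be the hint expert, whose reward $\mathrm{CLV}-O(\sqrt{TL})$ then already exceeds $\mathrm{OPT}$ so the first term is nonpositive while $G_e=O(\sqrt{TL})$. With either repair the upper bound goes through; as written, your proof rests on a false lemma.
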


The above result shows that there is a strict separation in regret bounds compared to the previous subsection when additional information is given. Observe that in Theorem~\ref{thm.1},  when $q$ becomes larger, the minimax regret becomes smaller because the error estimation is more accurate. In Theorem~\ref{thm.2}, however, the  lower bound   stays the same order as $q$ changes, and is strictly larger than the case of knowing the hint interval as long as $q > 1$. When $q \rightarrow \infty$, the upper and lower bound in Theorem~\ref{thm.1} gives an optimal $\widetilde{O}(\sqrt{L})$ magnitude for the minimax regret. 

The intuition behind this separation can be explained as follows: hint intervals can be considered as single hints plus an additional information of its accuracy. For example, if the hint interval has length $\epsilon\rightarrow 0$, one strategy would be to bid exactly  as the hint suggests; however, if the bidder do not observe an interval, it is hard for her to wisely arrange the weight she put in the hints given and thus leading to a smaller reward compared to previous case. This distinction turns out to be crucial because of the discontinuity of the reward in \eqref{def.reward}. 

Also note that the upper and lower bounds in Theorems \ref{thm.1} and \ref{thm.2} are tight within logarithmic factors, and exhibit the desired adaptive regret in the hint performance $L$: when $L$ is as large as $T$, an $\widetilde{O}(\sqrt{T})$ regret is attainable which is optimal without the hints; as the quality of the hints becomes better, the regret dependence on $T$ is greatly improved.

\subsection{Algorithm for Section~\ref{sec.thms}}
It is a classical result in online learning that the multiplicative weights algorithm in \cite{littlestone1994weighted} leads to a classical regret upper bound of $O\left(\sqrt{T \cdot \log K}\right)$, for $K$ experts and time horizon $T$ under the setting of prediction with expert advice. With hints in first-price auctions, instead of using optimistic online mirror descent in the literature, we modify the multiplicative weights in another way: we run the same multiplicative weights algorithm with an additional expert in the existing set of experts, whose bid depends on the hint $h_t$. 

Specifically, the algorithm to achieve the upper bound in Theorem~\ref{thm.1} is Algorithm~\ref{algo.thm1}. Construct $T$ base experts, and let the $i$-th ($i = 1, 2, \ldots, T$) base expert bid $\frac{i}{T}$ at each time $t$ (since $v_t\equiv 1$ the Lipschitz oracle will bid a constant value). It is easy to see that the discretization error incurs an additional regret at most $T\cdot (1/T) = \Theta(1)$. Now with the hint interval, we include an extra expert $a^*$ who bids $h_t + \sigma_t^{q/(q+1)}$ at each time $t$. Consequently, we have a set of $K=T+1$ experts in total, containing $T$ base experts and one hint expert. The multiplicative weights algorithm is then applied as follows: 
\begin{align}
    p_{t, a} = \frac{\text{exp}\left(\eta_t\cdot \sum_{s < t} r_{s, a}\right)}{\sum_{a' \in [K]} \text{exp}\left(\eta_t\cdot \sum_{s < t} r_{s, a'}\right)}, \quad a\in [K], \quad t = 1, \ldots, T, \notag
\end{align}

\begin{algorithm}[h!]
	\caption{Multiplicative Weights with Hint Intervals\label{algo.thm1}}
	\textbf{Input:} Time horizon $T$;  Hints accuracy $L$; $q$. \\
	\textbf{Output:} A bidding policy $\pi$. \\
	\textbf{Initialization:}
	Construct $K=T+1$ experts, with the $i$-th ($i = 1, 2, \ldots, T$) base expert bidding $\frac{i}{T}$ at each time $t$ and an extra expert $a^*$ bidding $h_t + \sigma_t^{q/(q+1)}$ at each time $t$; \\
	\For{$a \in [K]$}{	Initialize $r_{s,a} \leftarrow 0$;\\}
	\For{$t\in\{1,2,\cdots,T\}$}{
		The bidder receives a private value $v_t \equiv 1$; \\
		
		The bidder observes hint $h_t \in [0,1]$, along with its accuracy $\sigma_t$;\\
		Set $b_{t, a^*}\leftarrow h_t + \sigma_t^{\frac{q}{q+1}}$; \\
		Set learning rate $\eta_t\leftarrow \min \left\{\frac{1}{4}, \sqrt{\frac{\log K}{\sum_{s\le t}\sigma_s^{\frac{q}{q+1}}}}\right\}$;\\
		Let $b_{t}\leftarrow b_{t, a}$ with probability $$ p_{t, a} = \frac{\text{exp}\left(\eta_t\cdot \sum_{s < t} r_{s, a}\right)}{\sum_{a' \in [K]} \text{exp}\left(\eta_t\cdot \sum_{s < t} r_{s, a'})\right)}.$$

		The bidder receives others' highest bid $m_t$; \\
		\For{$a \in [K]$}{
			$$
			r_{t, a}\leftarrow r_{t, a} + r(b_{t, a};v_t, m_t). $$
		}
		Update $R_{t}\leftarrow R_{t-1}+ r(b_{t}; v_t, m_t)$;\\
	}
	
\end{algorithm}

The main ideas of our algorithm are:
\begin{itemize}
    \item We discretize the bids $b_t$ (and possibly the private values $v_t$ when we do not assume that $v_t\equiv 1$ later) when constructing oracles without large loss of cumulative rewards compared to continuous ones, since an enumeration of all 1-Lipschitz functions is unrealistic.
    \item Imagine if the bidder knew the possible range of $m_t$ at the beginning of time $t$: $[\underline{m}_t, \overline{m}_t]$, then she could bid the upper point $\overline{m}_t$, and this is a $(\overline{m}_t -\underline{m}_t)$-good expert in the sense of \cite[Lemma 1]{han2020learning}, which is an appropriate tool to handle the discontinuity in the rewards. 
\end{itemize}

We show in Appendix~\ref{appendix.a1} that Algorithm~\ref{algo.thm1} achieves the regret upper bound shown in Theorem~\ref{thm.1}. As for Theorem~\ref{thm.2} we include $N$ hint experts, each of which bids a constant gap $\Delta_i$ above $h_t$ at all time, i.e. the first one bids  $b_t = h_t$ for all $t$, and the second one bids $b_t = h_t + \frac{1}{T}$ for all $t$, ect. We now have a ``dense'' set of hint experts covering all strategies that bid $b_t = h_t + c$, $c\in[0, 1]$, $t = 1, 2, \cdots, T$, with total loss at most $\Theta(1)$. (See details of the proof in Appendix~\ref{appendix.a2}.)

 Notice that in both cases we required knowledge of $L$ to decide learning rate $\eta_t$ at each time $t$, but this problem can be addressed by existing techniques \cite{auer2002adaptive,yaroshinsky2004better}. Indeed, the upper bound in Theorem~\ref{thm.2} still holds even if $L$  is not known in advance.

\subsection{Varying Private Prices}

When private values could vary,  oracles cannot bid the same price anymore. The following theorem holds, again showing a strict separation and furthermore indicates that for varying $v_t$, hints does not help much for continuous bidding value.

\begin{theorem}\label{thm.varyingvt}
 Let $L\in [1, T]$, $q\in[1, \infty)$, and $v_t \in [0,1]$ for $t = 1, 2, \ldots, T$. If the bidder observes hint intervals, the following characterization of the minimax regret holds:
\begin{align}
       \mathop{\mathrm{inf}}\limits_{\pi} \mathop{\mathrm{sup}}\limits_{\{v_t, m_t, h_t, \sigma_t\}} \text{\rm Reg}(\pi) = \widetilde{\Theta}\left(\min \left\{T^{\frac{1}{q+1}} L^{\frac{q}{q+1}}, \sqrt{T}\right\}\right), \notag 
\end{align}
where the supremum is taken over rewards under all possible $\{v_t, m_t, h_t, \sigma_t\}$ sequences and hints that satisfy~(\ref{def.sigma}), and the infimum is taken over all possible policies $\pi$.
\end{theorem}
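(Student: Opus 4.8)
The plan is to establish the two matching directions separately; since $L$ (hence which of the two bounds is smaller) is known, the algorithm may simply run the better of two procedures. For the hint-exploiting upper bound I would analyze the single deterministic policy that bids $b_t=\min\{v_t,\,h_t+\sigma_t^{q/(q+1)}\}$ and control its expected regret against an \emph{arbitrary} fixed oracle $a\in\mathcal{F}_{\text{L-M}}$ round by round. Split on whether $a$ wins at time $t$: if $a$ loses there is no regret; if $a$ wins, then either we also win, incurring regret $b_t-a(v_t)\le (h_t+\sigma_t^{q/(q+1)}-m_t)^+$ with expectation $O(\sigma_t^{q/(q+1)})$ because $\mathbb{E}|h_t-m_t|^q\le\sigma_t^q$ forces $\mathbb{E}(h_t-m_t)^+\le\sigma_t\le\sigma_t^{q/(q+1)}$, or we lose, which by Markov's inequality occurs with probability at most $\sigma_t^{q/(q+1)}$ and costs at most $1$. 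Summing gives $O(\sum_t\sigma_t^{q/(q+1)})$, and Hölder's inequality with exponents $(q+1)/q$ and $q+1$ yields $\sum_t\sigma_t^{q/(q+1)}\le T^{1/(q+1)}L^{q/(q+1)}$. In parallel, the full-information algorithm of \cite{han2020learning}, run while ignoring the hints, guarantees $\widetilde{O}(\sqrt{T})$ against all $1$-Lipschitz increasing oracles even for adversarial $m_t$. Taking whichever guarantee is smaller (or combining the two as meta-experts) gives the upper bound $\widetilde{O}(\min\{T^{1/(q+1)}L^{q/(q+1)},\sqrt{T}\})$.

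For the lower bound I would use a block (direct-sum) construction with equal accuracies $\sigma_t\equiv\sigma=L/T$. Take $n$ equally spaced values $v^{(1)}<\dots<v^{(n)}$ in $[0,1]$, present each of them $k=T/n$ times, and for every occurrence draw $m_t$ \emph{independently} from a two-point distribution on $\{M_j,\,M_j+\Delta\}$ near the hint, with high-point mass $p$ and gap $\Delta$ chosen so that $p\Delta^{q}=\sigma^{q}$ and $p\asymp\Delta\asymp\sigma^{q/(q+1)}$, saturating \eqref{def.sigma}. The comparator is the oracle that in hindsight plays, on each value $v^{(j)}$, the better of ``bid $M_j$'' (risky) and ``bid $M_j+\Delta$'' (safe). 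The critical design choice is the spacing: taking it at least $2\Delta$ guarantees that for every realization these per-value best bids are jointly monotone and $1$-Lipschitz, so the comparator is a legitimate element of $\mathcal{F}_{\text{L-M}}$; this in turn caps $n=O(\sigma^{-q/(q+1)})$.

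I would then bound the regret block by block. For any causal policy the expected reward on a single round is at most $u_j(1-p)$, where $u_j=v^{(j)}-M_j$, because the bid is history-measurable and hence independent of the current type, whereas the hindsight comparator earns an extra $\mathbb{E}[(\text{safe}-\text{risky})^+]$. A short computation lower bounds this hindsight boost by $\Omega(\min\{kp,\sqrt{kp}\})$: when $kp\lesssim 1$ it is $\Omega(kp)$, since a high type appears with probability $\asymp kp$ and then costs the online bidder $\Omega(1)$ that hindsight avoids; when $kp\gtrsim 1$ a constant fraction of the blocks can be made near-balanced ($u_j\approx\Delta/p$), so an anti-concentration (central-limit) estimate gives boost $\Omega(\sqrt{kp})$. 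Summing over the $n$ blocks and using $nk=T$, the total is $\Omega(n\min\{kp,\sqrt{kp}\})$, which is maximized at the crossover $kp\asymp 1$ subject to $n=O(\sigma^{-q/(q+1)})$, giving $\Omega(\min\{Tp,\sqrt{T}\})=\Omega(\min\{T^{1/(q+1)}L^{q/(q+1)},\sqrt{T}\})$.

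The hard part is the lower bound, and specifically reconciling three competing demands. The comparator must be strong, exploiting the realized $m_t$ through hindsight, yet must remain a single monotone $1$-Lipschitz function; at the same time the noise in $m_t$ must be genuine enough that the online bidder cannot, even with full-information feedback, average over the $k$ repetitions of a value to catch up, which is exactly the effect that makes Theorem~\ref{thm.1} scale only as a square root. These pull in opposite directions: independence across rounds is what lets the online learner aggregate, while the Lipschitz-monotone constraint couples neighbouring values and limits their number to $\sigma^{-q/(q+1)}$. Fitting the two-point parameters, the spacing, and the two per-block regimes ($kp\lesssim 1$ versus $kp\gtrsim 1$) together is the crux, and it is precisely this ceiling on the number of independent blocks that produces the $\sqrt{T}$ cap and the quantitative separation from the single-value case.
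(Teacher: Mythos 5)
Your proposal is correct in substance and its upper bound is essentially the paper's: bid the hint plus the offset $\sigma_t^{q/(q+1)}$ to get per-round expected regret $O(\sigma_t^{q/(q+1)})$ (Markov for the losing event, the $q$-th moment bound for the overshoot), sum and apply H\"older/Jensen to get $T^{1/(q+1)}L^{q/(q+1)}$, and run the hint-free $\widetilde{O}(\sqrt{T})$ algorithm of \cite{han2020learning} in parallel. Your lower bound also shares the paper's architecture --- a block construction with constant $\sigma$, per-block two-point distributions for $m_t$, and the $1$-Lipschitz/monotone spacing constraint capping the number of blocks --- but the per-block mechanism is genuinely different. The paper (Appendix~\ref{appendix.thm3}) takes $\lfloor T^{1/(q+1)}L^{q/(q+1)}\rfloor$ blocks only in the regime $L\le(\sqrt{T})^{(q-1)/q}$, extracts $\Omega(1)$ per block by Le Cam's two-point method with a KL computation (perturbing the high-point mass by $\pm\delta$, exactly as in the proof of Theorem~\ref{thm.1}), and disposes of the complementary regime by invoking the known hint-free $\Omega(\sqrt{T})$ bound. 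You instead keep the two-point distribution fixed and lower-bound the hindsight boost $\mathbb{E}[\max\{R_{\rm risky},R_{\rm safe}\}]-\max\{\mathbb{E}R_{\rm risky},\mathbb{E}R_{\rm safe}\}$ via Binomial anti-concentration; this exploits the realized-maximum comparator in \eqref{def.regret} rather than statistical indistinguishability, and it has the aesthetic advantage of producing both the $T^{1/(q+1)}L^{q/(q+1)}$ and the $\sqrt{T}$ regimes from a single construction by pushing $n$ to its Lipschitz cap $\sigma^{-q/(q+1)}$. Two small points to tighten: your optimization narrative ("maximized at the crossover $kp\asymp1$") is slightly off --- for $kp\le1$ the total $n\cdot kp=Tp$ is independent of $n$ and for $kp\ge1$ the total $\sqrt{nTp}$ is increasing in $n$, so the right statement is that you take $n$ at its cap and the two regimes of $\min\{Tp,\sqrt{T}\}$ fall out; and in the $kp\lesssim1$ case the $\Omega(kp)$ boost does not come from hindsight "avoiding" a high type (with a single high realization the risky arm is still the hindsight winner when $kp<1$) but from the event that no high type appears, which has probability $\approx1$ and on which the risky arm beats the tied expectations by $u_jkp$. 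Neither issue affects the final rates.
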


\begin{theorem}\label{thm.thm4}
 Let $L\in [1, T]$, and $v_t \in [0,1]$ for $t = 1, 2, \ldots, T$. If the bidder observes single hints, then $\forall q \in[1, \infty)$, the following characterization of the minimax regret holds:
\begin{align}
       \mathop{\mathrm{inf}}\limits_{\pi} \mathop{\mathrm{sup}}\limits_{\{v_t, m_t, h_t, \sigma_t\}} \text{\rm Reg}(\pi) = \widetilde{\Theta}\left(\sqrt{T}\right), \notag 
\end{align}
where the supremum is taken over rewards under all possible $\{v_t, m_t, h_t, \sigma_t\}$ sequences and hints  that satisfy~(\ref{def.sigma}), and the infimum is taken over all possible policies $\pi$.
\end{theorem}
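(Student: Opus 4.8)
The plan is to establish matching $\widetilde O(\sqrt T)$ upper and $\widetilde\Omega(\sqrt T)$ lower bounds, the lower bound being the substantive half. For the upper bound I would simply \emph{discard} the hint: ignoring $h_t$ reduces the problem to the adversarial full-information first-price auction with varying $v_t$, for which running multiplicative weights over a discretized cover of the $1$-Lipschitz increasing oracle class $\mathcal F_{\text{L-M}}$ — handling the reward discontinuity through the ``$\epsilon$-good expert'' device of \cite[Lemma 1]{han2020learning} — already yields regret $\widetilde O(\sqrt T)$. Since adding the dense hint experts $h_t+\{0,1/T,\dots,1-1/T\}$ only inflates $\log K$ by a constant, the same bound persists; crucially it holds for every $L\in[1,T]$ precisely because it never uses the hint.

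For the lower bound, the first step is a \emph{monotonicity reduction}: writing $R(L)$ for the minimax regret when the error budget obeys $\sum_t\sigma_t\le L$, the feasible instance set grows with $L$, so $R(\cdot)$ is non-decreasing and it suffices to prove $R(1)\ge c\sqrt T$. This smallest-budget regime is exactly the hard one, and is the reason single hints cannot match the interval bound of Theorem~\ref{thm.varyingvt} (whose reduction gives only $\Omega(T^{1/(q+1)})$ at $L=1$). I would then build a hard instance by \emph{stacking} $K=\Theta(\sqrt T)$ statistically independent copies of the constant-value single-hint construction underlying Theorem~\ref{thm.2}, placed at well-separated value levels $v^{(1)}<\dots<v^{(K)}$ in $[1/2,1]$ with spacing $\Theta(1/\sqrt T)$. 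Each band receives $n=T/K=\Theta(\sqrt T)$ rounds and a local budget $L_j=1/\sqrt T$, so that $\sum_j L_j=1$ respects \eqref{def.sigma}, while Theorem~\ref{thm.2} forces a local regret of order $(nL_j)^{1/4}=\Theta(1)$ per band; summing over the $\Theta(\sqrt T)$ bands gives $\Omega(\sqrt T)$.

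Two structural claims make the summation legitimate: (i) the benchmark \emph{decomposes}, i.e. a single $1$-Lipschitz increasing oracle can simultaneously realize each band's optimal constant bid, and (ii) the learner cannot transfer information across bands. For (i) I would verify that the optimal bids of adjacent bands differ by at most the $\Theta(1/\sqrt T)$ level-spacing, which amounts to checking that the internal scale of each sub-instance (its error/margin scale, of order $\sqrt{L_j/n}$) does not exceed the inter-band separation — the very constraint that pins down $K=\Theta(\sqrt T)$. For (ii) I would draw the $K$ sub-instances from a product prior, so that the full-information feedback from band $j'$ is independent of band $j$; the Bayes regret of the global problem then lower-bounds the sum of the per-band Bayes regrets, and minimax dominates Bayes.

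The main obstacle is the joint feasibility in step~(i): one must choose the value levels, the hints $h_t$, and the law of $m_t$ so that (a) the $1$-Lipschitz monotonicity of $\mathcal F_{\text{L-M}}$ is compatible with granting each band an \emph{independent} near-optimal constant, (b) the reward discontinuity still penalizes the online learner by $\Theta(1)$ per band despite the tiny per-round error $\sigma_t=1/T$, and (c) the total budget stays at $L=1$. Balancing the Lipschitz ceiling (scale $\lesssim 1/K$) against the number of bands is the delicate quantitative point, and it is exactly this tension — absent when $v_t\equiv1$, where one constant oracle cannot be split into independent pieces — that upgrades the $v_t\equiv1$ rate to the full $\widetilde\Theta(\sqrt T)$ and explains why, unlike hint intervals, single hints buy no improvement for varying private values.
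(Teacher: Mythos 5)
Your proposal is correct and takes essentially the same route as the paper: the upper bound likewise comes from discarding the hint and running the no-hint $\widetilde{O}(\sqrt{T})$ algorithm, and the paper's lower bound is exactly your stacking construction — the horizon is split into $\sqrt{T}$ bands at value/hint levels spaced by $\Theta(1/\sqrt{T})$, each band runs the two-point single-hint construction of Theorem~\ref{thm.2} with $n=\sqrt{T}$ rounds and local budget $1/\sqrt{T}$ (so the total budget is $\Theta(1)$), and the $\Omega\bigl((nL_j)^{1/4}\bigr)=\Omega(1)$ per-band regrets are summed. Your explicit attention to the Lipschitz-compatibility of the bands and the product-prior independence matches the checks the paper performs in the proof of Theorem~\ref{thm.varyingvt}, to which its proof of this theorem defers.
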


Although both results in Theorems \ref{thm.varyingvt} and \ref{thm.thm4} are tight within logarithmic factors, they are pessimistic results. When the hint intervals are observed, Theorem \ref{thm.varyingvt} shows that the help from the hint exhibits a thresholding phenomenon: either bidding without hints or only bidding the hints is optimal. For example, when $q=\infty$, the minimax regret is simply a tedious quantity $\widetilde{\Theta}(\min\{L,\sqrt{T}\})$. When there are only single hints, Theorem \ref{thm.thm4} even shows that the hint is of no help unless its quality is very high, i.e. $L\le 1$. This pessimistic situation is alleviated in the next section, by imposing an additional assumption that others' bids $m_t$ are only supported on a few locations. 


\section{Exploiting the Sparsity of Others' Bids} \label{part.two}
The previous section gives a pessimistic result that hints help only to the same extent of bidding the hint itself. To mitigate this drawback, we identify a useful structure in practical online first-price auctions: the maximum competing bids $m_t$ are supported on only a few locations. See Figure~\ref{fig:support} for a typical example in real data. The reason why sparsity arises in practice is partially due to the scenario where the maximum competing bid is the reserve price set by the seller. 

\begin{figure}[htbp]
    \centering
    \includegraphics[scale = 1.0]{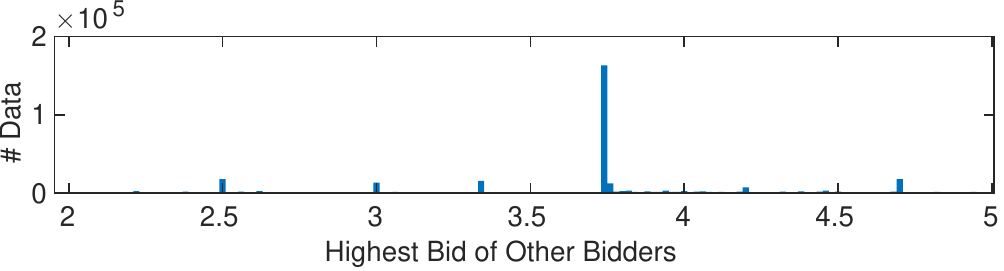}
    \caption{Histogram of others' highest bid in real data.}
    \label{fig:support}
\end{figure}

Assuming that $m_t$ is only supported on $K$ locations, the central question in this section is as follows: 
\begin{center}
    \emph{ How does the sparsity improve the minimax regret? Can we devise a learning algorithm that is adaptive to the parameter $K$ (and other parameters such as $L$)?}
\end{center}

\subsection{Minimax Regret with Sparsity}
The central result of this section is the following characterization of the minimax regret with sparsity: 
\begin{theorem}\label{thm.thm5}
For $q\in[1, \infty)$ and varying private prices, suppose that the minimum-bid-to-win $m_t$ only takes $K$ support values and the hint intervals are available. The regret is upper bounded by:
\begin{align}
    \mathop{\mathrm{inf}}\limits_{\pi} \mathop{\mathrm{sup}}\limits_{\{v_t, m_t, h_t, \sigma_t\}} \text{\rm Reg}(\pi) = O\left(\min \left\{\sqrt{\log T \cdot T^{\frac{1}{q+1}}\cdot L^{\frac{q}{q+1}} \cdot K}, T^{\frac{1}{q+1}}\cdot L^{\frac{q}{q+1}}, \sqrt{T}\right\}\right), \notag
\end{align}
where the supremum is taken over all $\{v_t, m_t, h_t, \sigma_t\}$ sequences and hints that satisfies~(\ref{def.sigma}), and the infimum is taken over all possible policies $\pi$. 
In addition, the following minimax lower bound for the regret holds: 
\begin{align}
   \inf_\pi \mathop{\mathrm{sup}}\limits_{\{v_t, m_t, h_t, \sigma_t\}} \text{\rm Reg}(\pi) = \Omega\left(\min \left\{\sqrt{ T^{\frac{1}{q+1}}\cdot L^{\frac{q}{q+1}} \cdot K}, T^{\frac{1}{q+1}}\cdot L^{\frac{q}{q+1}}, \sqrt{T}\right\}\right). \notag
\end{align}
\end{theorem}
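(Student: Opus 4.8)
The plan is to prove the two bounds separately, and in each to read the three terms of the minimum as three regimes. Since the learner knows $T$, $L$, $K$ and $q$, for the upper bound it suffices to exhibit, in each regime, an algorithm meeting the corresponding term and to run whichever has the smallest guaranteed value. The term $\sqrt{T}$ is delivered by the hint-free full-information algorithm of \cite{han2020learning}, and the term $T^{1/(q+1)}L^{q/(q+1)}$ by the hint-interval algorithm behind Theorem~\ref{thm.varyingvt}; neither uses sparsity. The whole novelty is the sparsity-adaptive term $\sqrt{\log T\cdot T^{1/(q+1)}L^{q/(q+1)}\cdot K}$, which improves on $T^{1/(q+1)}L^{q/(q+1)}$ exactly when $K\ll T^{1/(q+1)}L^{q/(q+1)}$, i.e. when the support is genuinely sparse, and therefore needs a new algorithm.

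For this term I would run a multiplicative-weights scheme in the good-expert framework of \cite[Lemma 1]{han2020learning}, but with experts tailored to the $K$ support locations rather than a fine bid grid. The structural observation is that against the benchmark one may replace the single $1$-Lipschitz oracle by a \emph{separate} oracle on each group $G_j:=\{t:m_t=\mu_j\}$, since $\sum_j \max_{a_j}\sum_{t\in G_j} r_{t,a_j}\ge \max_a\sum_t r_{t,a}$, which only strengthens the comparator. Within a fixed $G_j$ the competing bid $m_t\equiv\mu_j$ is a single location, so the problem collapses to the situation of Theorem~\ref{thm.1} (the varying $v_t$ only rescales the per-round profit, never the discontinuity), and a hint-expert bidding $h_t+\sigma_t^{q/(q+1)}$ (clipped to $[0,v_t]$) is $\sigma_t^{q/(q+1)}$-good by Markov's inequality applied to~(\ref{def.sigma}). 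Hence $\mathrm{Reg}_j\le C\sqrt{\log T\cdot T_j^{1/(q+1)}L_j^{q/(q+1)}}$ with $T_j=|G_j|$ and $L_j=\sum_{t\in G_j}\sigma_t$. Summing, applying Cauchy--Schwarz over the $K$ groups and then H\"older's inequality $\sum_j T_j^{1/(q+1)}L_j^{q/(q+1)}\le T^{1/(q+1)}L^{q/(q+1)}$, gives $\sum_j\mathrm{Reg}_j\le C\sqrt{\log T\cdot K\cdot T^{1/(q+1)}L^{q/(q+1)}}$, the claimed term.

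The main obstacle is turning this offline, group-by-group bookkeeping into one online policy: the learner does not know a round's group at decision time, and plain multiplicative weights competes only with the best \emph{single} expert, not the best expert \emph{per group}. I expect to resolve this by using the hint to gate the experts, since $[h_t-\sigma_t,h_t+\sigma_t]$ localizes $m_t$ to a few support values; one plays a good-expert update among only the hint-consistent support-experts and charges each misrouting to the $\sigma_t^{q/(q+1)}$ budget already counted above. Making this routing lossless up to constants — so the online regret matches $\sum_j\mathrm{Reg}_j$ and in particular upgrades the linear-in-goodness guarantee $T^{1/(q+1)}L^{q/(q+1)}$ of a single hint-expert to the square-root guarantee $\sqrt{K\,T^{1/(q+1)}L^{q/(q+1)}}$ — is the technical heart of the argument.

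For the lower bound I would use a direct-sum construction. Partition $[0,1]$ into $K$ increasing value-bands, place one support value $\mu_j$ in band $j$ with $\mu_1<\cdots<\mu_K$, and embed in band $j$ a scaled copy of the single-hint-interval hard instance of Theorem~\ref{thm.1} with $T_j=T/K$ rounds and budget $L_j=L/K$. Because the $\mu_j$ increase with $v$, a single increasing $1$-Lipschitz oracle (bidding near $\mu_j$ on band $j$) simultaneously attains all per-band optima, so the benchmark is the sum of the $K$ per-band benchmarks, while the $K$ sub-instances are information-theoretically independent so that any policy's regret is at least the sum of the per-band regrets. Each band forces $\Omega(\sqrt{T_j^{1/(q+1)}L_j^{q/(q+1)}})=\Omega(\sqrt{T^{1/(q+1)}L^{q/(q+1)}/K})$, and summing the $K$ of them yields $\Omega(\sqrt{K\,T^{1/(q+1)}L^{q/(q+1)}})$; the remaining terms $T^{1/(q+1)}L^{q/(q+1)}$ and $\sqrt{T}$ are inherited from the lower bounds of Theorem~\ref{thm.varyingvt}, and the overall bound is their minimum since regret cannot exceed the hint-free value. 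The delicate point is the direct-sum step: I must check the bands do not interfere through the global $1$-Lipschitz constraint and that the hint revealed in one band leaks no usable information about another, which the increasing-band layout is designed to ensure.
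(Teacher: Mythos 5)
Your lower-bound argument is essentially the paper's: it also splits into regimes, and in the main regime embeds $K$ independent copies of the Theorem~\ref{thm.1} two-point construction into $K$ increasing value-bands with $T_j=T/K$, $L_j=L/K$, summing $K\cdot\Omega\bigl(\sqrt{T^{1/(q+1)}L^{q/(q+1)}/K}\bigr)$, and inherits the other two terms from Theorem~\ref{thm.varyingvt} (the paper makes explicit the case split on $L$ that guarantees the number of subproblems never exceeds $K$, which you gloss over but which does work out). The $\sqrt{T}$ and $T^{1/(q+1)}L^{q/(q+1)}$ parts of the upper bound are likewise handled as you say.

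The genuine gap is in your derivation of the $\sqrt{K\log T\cdot T^{1/(q+1)}L^{q/(q+1)}}$ upper bound. Your offline bookkeeping (per-group regrets, Cauchy--Schwarz, H\"older) is fine, but the step you yourself flag as ``the technical heart'' --- converting the per-group guarantee into a single online policy that competes with the best expert \emph{per group} $G_j=\{t:m_t=\mu_j\}$ without knowing a round's group at decision time --- is not resolved, and the gating mechanism you sketch does not obviously work: the hint interval $[h_t-\sigma_t,h_t+\sigma_t]$ may contain many support values, and since \eqref{def.sigma} is only a $q$-th moment bound, Markov gives no nontrivial control on $\bP(|h_t-m_t|>\sigma_t)$ itself, so misroutings are not readily charged to the $\sigma_t^{q/(q+1)}$ budget. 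The paper sidesteps this entirely: it never introduces per-group comparators. Instead it observes that any $1$-Lipschitz increasing oracle can be replaced, at an $O(1)$ cost, by a monotone step function whose values lie in the $K$-point support, and that there are at most $\binom{T+K-1}{K}\le T^K$ such functions (a lattice-path bijection). Running the hint-augmented multiplicative weights of Algorithm~\ref{algo.thm1} over this finite class, with the Lemma~\ref{lemma.proofthm1} analysis in which the hint expert centers the rewards so that $\bE[r_t^*]\le 3\sigma_t^{q/(q+1)}$, directly yields $O\bigl(\sqrt{\log(T^K)\cdot\sum_t\sigma_t^{q/(q+1)}}\bigr)=O\bigl(\sqrt{K\log T\cdot T^{1/(q+1)}L^{q/(q+1)}}\bigr)$; the sparsity enters only through the log-cardinality of the comparator class. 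To complete your proof you would either need to supply the missing routing/specialists argument, or replace that portion with this counting argument.
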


Theorem \ref{thm.thm5} shows that the minimax regret exhibits an elbow with respect to the sparsity: the regret grows sublinearly with $K$ for small $K$, but reduces to the fixed value $\Theta\left(\min \left\{T^{\frac{1}{q+1}} L^{\frac{q}{q+1}}, \sqrt{T}\right\}\right)$ of Theorem \ref{thm.varyingvt} when $K$ is large. Moreover, the better the hint quality is, the more helpful the sparsity will be. For example, if $L=\Theta(1)$, sparsity helps reduce the regret whenever $K < T^{1/(q+1)}$; as the other extreme, if $L=\Theta(T)$, the classical $\Theta(\sqrt{T})$ regret is always unavoidable, no matter how small $K$ is.  




\subsection{Meta Algorithm Adaptive to Unknown $L$ and $K$}

In the proof of Theorem~\ref{thm.thm5} (see appendix~\ref{appendix.thm5}) we prove the three upper bounds can indeed be achieved separately. If a bidder observes the overall quality of hints $L$ at the beginning, she can easily choose the best among those three bounds and achieve an optimal. It is not straightforward though, to achieve them simultaneously without this knowledge. For instance, to achieve each single upper bound, there is a data-driven algorithm agnostic to $L$ (such as using the idea in \cite{auer2002adaptive}); however, the choice of which algorithm to use still depends on the knowledge of $L$.

Algorithm~\ref{algo.1} (pseudocode in Appendix~\ref{appendix.sudocode}) with meta-experts addresses this problem and ensures to achieve the optimal one among the three upper bounds for any $L = \Omega(1)$. The intuition is to give hints a higher priority if the error is small. Meanwhile, with this meta structure, it is possible to set different fine tuning strategy for the two layers. By including three ``Meta Experts'' in the upper layer, whose strategies are indeed output of algorithms rather than pre-designed oracles,  we can achieve the minimum of the three regret bounds listed in  Theorem~\ref{thm.thm5}.

\begin{tikzpicture}
	[thick,scale=1, every node/.style={scale=1}]
	\node {Algorithm~\ref{algo.1}}
	child {node {Super Expert}
		child {node {Base Experts}
		}
		child [missing] {}	
		child [missing] {}	
		child {node {Hint Expert(s)}
		}
	}	
	child [missing] {}	
	child [missing] {}	
	child { node {Hint Expert(s)}
	}	
	child [missing] {}	
	child [missing] {}	
	child { node {ChEW\cite{han2020learning}}
	};
	\end{tikzpicture}

\begin{theorem}\label{thm.thm6}
Algorithm~\ref{algo.1} achieves the regret upper bound of Theorem~\ref{thm.thm5}, while it is agnostic to the knowledge of $L$. 
\end{theorem}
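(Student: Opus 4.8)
The plan is to prove Theorem~\ref{thm.thm6} by reducing it to the problem of adaptively aggregating three base algorithms, each of which certifies one of the three terms inside the minimum of Theorem~\ref{thm.thm5}, so that the aggregate pays only the smallest of the three up to logarithmic factors. Let $A_1$ be the ``Super Expert'', i.e.\ multiplicative weights run over the discretized base experts together with the hint experts, which in the proof of Theorem~\ref{thm.thm5} attains the sparsity bound $B_1 = \sqrt{\log T\cdot T^{1/(q+1)} L^{q/(q+1)} K}$; let $A_2$ be the hint experts alone, attaining $B_2 = T^{1/(q+1)} L^{q/(q+1)}$; and let $A_3$ be the ChEW algorithm of \cite{han2020learning}, attaining $B_3 = \sqrt{T}$. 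The first step is to make each $A_i$ agnostic to $L$: exactly as in Algorithm~\ref{algo.thm1}, I would replace the fixed learning rate by the self-confident schedule $\eta_t = \min\{1/4, \sqrt{\log N/\sum_{s\le t}\sigma_s^{q/(q+1)}}\}$ driven by the observed accuracies $\sigma_s$, where $N$ is the number of experts in that layer; by the argument of \cite{auer2002adaptive} this preserves each bound while removing the need to know $L$ in advance, as already noted after Theorem~\ref{thm.2}.

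The crux is the top layer of Algorithm~\ref{algo.1}, which runs multiplicative weights over $A_1, A_2, A_3$ as meta-experts. The key is to control the meta-overhead using the good-expert lemma \cite[Lemma 1]{han2020learning} rather than a generic $\sqrt{T\log 3}$ bound: that lemma yields a regret of order $\sqrt{\log(\text{number of experts})\cdot \sum_t \rho_t}$ against any expert that wins with per-round margin $\rho_t$, so the overhead is only the \emph{square root} of the cumulative margin. For the $B_2$ branch this closes immediately: the hint expert bids $h_t + \sigma_t^{q/(q+1)}$, winning with margin $\rho_t \approx \sigma_t^{q/(q+1)}$, so its reward is within $\sum_t \sigma_t^{q/(q+1)} \le T^{1/(q+1)} L^{q/(q+1)} = B_2$ of the oracle, while the top-layer overhead against it is $\sqrt{\log 3\cdot \sum_t \sigma_t^{q/(q+1)}} = \widetilde{O}(\sqrt{B_2})$; since all bounds are $\ge 1$, this is dominated by $B_2$, giving $\widetilde{O}(B_2)$ through this branch. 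The $B_3$ branch is the safety net: the overhead against $A_3$ is at most $\sqrt{T\log 3} = \widetilde{O}(\sqrt T)$, guaranteeing $\widetilde{O}(\sqrt T)$ unconditionally.

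For the $B_1$ branch I would chain the same estimate one level down. Inside $A_1$ the winning base/hint expert wins with the per-round margins produced by the sparsity construction of Theorem~\ref{thm.thm5}, whose cumulative value is $\widetilde{O}(B_1^2)$, so the inner good-expert lemma certifies $A_1$ at regret $\widetilde{O}(B_1)$, and the top-layer overhead against $A_1$ is again the square root of that cumulative margin, hence $\widetilde{O}(B_1)$. Adaptivity to unknown $K$ comes for free because $A_1$ uses a fixed discretization and never references $K$; its regret scales with the realized number of distinct values of $m_t$, which can additionally be tracked online as $\hat K_t$ and fed into the learning rate through a doubling argument if an explicit dependence is wanted. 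Assembling the branches, for each $i$ the aggregate satisfies $\text{Reward}(\pi) \ge \mathrm{OPT} - B_i - \text{overhead}_i \ge \mathrm{OPT} - \widetilde{O}(B_i)$; taking the best $i$ gives $\text{Reg}(\pi) = \widetilde{O}(\min\{B_1,B_2,B_3\})$, matching Theorem~\ref{thm.thm5}.

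I expect the main obstacle to be the $B_1$ branch, specifically justifying the good-expert lemma at the meta level for the \emph{randomized} Super Expert: unlike the clean hint expert, $A_1$'s realized bid is the output of an inner multiplicative-weights process, so one must show that its per-round winning margin (or a suitable surrogate reward sequence certified by the inner analysis) has cumulative value $\widetilde{O}(B_1^2)$, and that the two self-confident learning-rate schedules on the two layers compose without eroding the sparsity gain. A secondary difficulty is that the identity of the minimizing bound is unknown and shifts as the online estimates $\hat L_t, \hat K_t$ evolve; I would address this by verifying that all three per-branch guarantees hold simultaneously and anytime, so that no prior knowledge of the active regime is ever required.
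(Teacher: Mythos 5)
Your proposal follows essentially the same route as the paper: the same three meta-experts (Super Expert over the $T^K$ sparsity-adapted base experts, the hint expert, and ChEW) aggregated by a top-layer multiplicative weights whose learning rate is driven by the running sum $\sum_{s\le t}\sigma_s^{q/(q+1)}$, with the crucial observation that the per-round margin over the hint expert is $O(\sigma_t^{q/(q+1)})$ so the meta-overhead is only $O\left(\sqrt{\log 3\cdot\sum_t \sigma_t^{q/(q+1)}}\right)$ and is absorbed by each of the three branch bounds since $L\ge 1$. The paper carries out exactly this argument by rerunning the Bernstein-based analysis of Appendix B.1 on both layers, which is the same mechanism you invoke via the good-expert lemma, so your proposal is correct and matches the paper's proof.
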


Theorem \ref{thm.thm6} is also true when the support size $K$ and support locations are unknown together with $L$.\footnote{Since our algorithm is adaptive to $T$ as discussed in the previous section, it is equivalent to say no parameter about the whole game is known.} Using similar ideas of the doubling trick, we can deal with unknown $K$. At the beginning assume a constant value $K_0$ (e.g. $K_0=8$), and run the above algorithm until the number of observed supports up to time $t$ exceeds the current value. If that happens, we double the value of $K$: $K_{i+1} = 2\cdot K_i$ ($i \in \mathbb{N}$), and restart the learning algorithm. Since assuming a larger $K$ than reality only makes the oracle stronger, substituting the regret upper bound (\ref{align.bound}) below to this strategy leads to regret upper bound (let $M := \ceil{\log\left(\frac{K}{8}\right)} + 1$): 
\begin{align}
O\left(\sqrt{\log T} \cdot\sum_{i=0}^{M-1}\sqrt{K_i \cdot \sum_{t=T_i}^{T_{i+1}} \sigma_t^{\frac{q}{q+1}}}\right) &\le O\left(\sqrt{\log T}\cdot \sqrt{\left( K_0 + \ldots + K_M\right)\cdot  \left(\sum_{t=1}^T\sigma_t^{\frac{q}{q+1}}\right)}\right) \notag \\
&\le O\left(\sqrt{\log T\cdot 2K \cdot T^{\frac{1}{q+1}}\cdot L^{\frac{q}{q+1}}}\right), \notag
\end{align}
where $K_i := K_0 \cdot 2^i$ and $T_i$ are the dividing points when current number of supports exceeds current $K$ ($T_{M+1} = T$). 

For unknown support values construct an expert set as follows. We divide $y$-axis (bidding price) into $T$ small intervals, with the $i$-th interval being  $[\frac{i}{T}, \frac{i+1}{T}]$ ($i = 0, 1, \cdots, T-1$). Our goal is to allocate the $K$ supports into these $T$ intervals, with each interval containing at most one support (two supports in the same interval could be merged with an additional $O(1)$ regret). The total number of allocations is $\binom{T}{K}$, and for each allocation, the oracle can only choose from $T^K$ bidding functions (cf. the proof of Theorem~\ref{thm.thm5}), thus the size of the complete expert set is:
\begin{align}
    |\text{expert set}| \le \binom{T}{K}\cdot T^K\le T^{2K}. \notag
\end{align}
Then the regret upper bound can be written as (using similar analysis to Appendix~\ref{appendix.a1})
\begin{align}\label{align.bound}
    O\left(\sqrt{\log(T^{2K}) \cdot \sum_{t=1}^T\sigma_t^{\frac{q}{q+1}}}\right)  =
   O\left(\sqrt{K\cdot \log T \cdot \sum_{t=1}^T\sigma_t^{\frac{q}{q+1}}}\right). 
\end{align}

\subsubsection{Single Hints Case}
In Section~\ref{part.one} we mainly showed that there is a strict difference in the regret bound for single hints and hint intervals case. Here we also provide results for single hints case for completeness. 

\begin{theorem}\label{thm.thm7}
For $q\in[1, \infty)$ and varying private prices, if the minimum bid to win takes $K$ support values and the bidder only observes a single hint at each time then the regret bounds hold:
\begin{align}
    \mathop{\mathrm{inf}}\limits_{\pi} \mathop{\mathrm{sup}}\limits_{\{v_t, m_t, h_t, \sigma_t\}} \text{\rm Reg}(\pi) = \widetilde{\Theta}\left(\min \left\{\sqrt{T}, \sqrt{\sqrt{L T}\cdot K}\right\}\right), \notag
\end{align}
where the supremum is taken over all $K$-sparse $m_t$ sequence and hints that satisfies~(\ref{def.sigma}), and the infimum is taken over all possible policies $\pi$. 
\end{theorem}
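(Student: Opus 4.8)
The plan is to establish Theorem~\ref{thm.thm7} as the single-hint analogue of the sparse hint-interval bound of Theorem~\ref{thm.thm5}, the guiding principle being that a single hint forces the bidder to behave as if $q=1$. Since $\sigma_t$ is never revealed, the accuracy exponent $q$ cannot be exploited, and in the worst case one replaces the interval quantity $T^{1/(q+1)}L^{q/(q+1)}$ by its value at $q=1$, namely $\sqrt{TL}$. Substituting this into the sparse bound $\sqrt{\log T\cdot T^{1/(q+1)}L^{q/(q+1)}\cdot K}$ of Theorem~\ref{thm.thm5} yields exactly the target $\widetilde{O}(\sqrt{\sqrt{LT}\cdot K})$, capped by the no-hint value $\sqrt{T}$.

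For the upper bound I would run multiplicative weights over the union of two expert families: (i) the sparse base experts of Theorem~\ref{thm.thm5}, obtained by allocating the $K$ supports into $T$ discretized bidding intervals, of which there are at most $\binom{T}{K}T^{K}\le T^{2K}$, contributing $\log(\#\text{experts})=O(K\log T)$; and (ii) the dense grid of single-hint experts of Theorem~\ref{thm.2}, each bidding $h_t+c$ for $c$ on a grid of $[0,1]$ of resolution $1/T$, which at discretization cost $O(1)$ covers every fixed gap. The crux is to exhibit a fixed-gap hint expert that is $\epsilon$-good in the sense of \cite[Lemma 1]{han2020learning} with aggregate quality $\sqrt{TL}$: choosing $c\asymp\sqrt{L/T}$, the expert $h_t+c$ wins with per-round excess loss $O(c)$ except on the rounds where $m_t-h_t>c$, of which (by Markov's inequality applied to~\eqref{def.sigma} and $\sum_t\sigma_t=L$) there are at most $O(L/c)$; balancing $Tc$ against $L/c$ gives total excess $O(\sqrt{TL})$, and this is independent of $q$ because the adversary may take $m_t$ deterministic. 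Feeding this aggregate quality $S=\sqrt{TL}$ and $N=T^{2K}$ into the improved good-expert guarantee used in Appendices~\ref{appendix.a1} and~\ref{appendix.a2} — which turns a good-expert surplus $S$ and expert count $N$ into regret $\widetilde{O}(\sqrt{S\log N})$ rather than $S+\widetilde{O}(\sqrt{S\log N})$ — yields $\widetilde{O}(\sqrt{\sqrt{TL}\cdot K})$. The $\sqrt{T}$ branch comes from discarding the hints and running ChEW of \cite{han2020learning}, and the minimum of the two is attained, without knowing $L$ or $K$, by the meta-algorithm of Theorem~\ref{thm.thm6}.

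For the matching lower bound I would first reduce away the $q$-dependence and then tensorize over the $K$ supports. To kill $q$, take $\sigma_t\in\{0,1\}$ and $m_t$ deterministic, so that~\eqref{def.sigma} reduces to $|h_t-m_t|\le\sigma_t$ simultaneously for every $q\in[1,\infty)$; any instance so built is valid for all $q$, and observing $\sigma_t$ (were it revealed) would be useless, so single hints and hint intervals coincide on it. On a single support this recovers the $(TL)^{1/4}$ obstruction of Theorem~\ref{thm.2}: the adversary hides whether each round is ``exact'' ($\sigma_t=0$, $m_t=h_t$) or ``corrupted'' ($\sigma_t=1$, $m_t$ adversarial), spending the budget $L$ on $\Theta(L/c)$ corrupted rounds so that any blind gap $c$ either forfeits $\Omega(L/c)$ or pays $\Omega(Tc)$. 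I would then split the horizon into $K$ blocks of length $T/K$ and budget $L/K$, assigning a distinct support location (and a disjoint private-value regime) to each block so that the competing $1$-Lipschitz increasing oracle in $\mathcal{F}_{\text{L-M}}$ decouples across blocks; the per-block regret is $\Omega\big((\tfrac{T}{K}\cdot\tfrac{L}{K})^{1/4}\big)=\Omega((TL)^{1/4}K^{-1/2})$, and a standard direct-sum argument over the $K$ independent copies gives $\Omega\big(K\cdot(TL)^{1/4}K^{-1/2}\big)=\Omega(\sqrt{\sqrt{TL}\cdot K})$, with the usual $\Omega(\sqrt{T})$ floor when hints are uninformative.

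I expect the main obstacle to be the upper bound's refined regret accounting: one must verify that the single-hint good-expert surplus enters the bound multiplicatively under the square root together with the sparse complexity $K\log T$ — i.e. that one pays $\widetilde{O}(\sqrt{\sqrt{TL}\cdot K})$ rather than the $\sqrt{TL}+\widetilde{O}(\sqrt{\sqrt{TL}\cdot K})$ delivered by a generic first-order (small-loss) Hedge bound. This is exactly where the discontinuous reward structure and the $\epsilon$-good-expert lemma of \cite{han2020learning} are indispensable, and where the dense grid (rather than a single $\sigma_t$-tuned hint expert, impossible here) is essential; jointly optimizing the grid resolution and the gap $c$ while keeping the discretization loss $O(1)$ is the delicate quantitative step. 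A secondary obstacle is ensuring the $K$ lower-bound blocks genuinely decouple under the global Lipschitz and monotonicity constraints on $\mathcal{F}_{\text{L-M}}$, which requires spreading the blocks' supports and private values so that no single oracle can be near-optimal across blocks beyond what the direct sum permits.
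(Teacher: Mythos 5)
Your proposal is correct and follows essentially the same route as the paper: the upper bound combines the $T^{K}$ sparse base experts with a dense grid of hint experts $h_t+i/T$ (whose best member, at gap $\asymp\sqrt{L/T}$, has surplus $O(\sqrt{TL})$ by the Theorem~\ref{thm.2} argument), feeds the surplus $S=\sqrt{TL}$ and $\log(\#\text{experts})=O(K\log T)$ into the refined multiplicative-weights bound $\widetilde{O}(\sqrt{S\log N})$ of Appendix~\ref{appendix.a1}, and takes the minimum with ChEW via the meta-structure; the lower bound likewise tensorizes the Theorem~\ref{thm.2} obstruction over $K$ decoupled blocks of horizon $T/K$ and budget $L/K$ (with the $\sqrt{T}$ floor handled by $\sqrt{T/L}<K$ blocks when $L>T/K^2$). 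The only difference is organizational — the paper arranges the experts in a two-layer meta-expert tree rather than a flat union — which does not change the regret accounting.
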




\section{Real-data Experiments}\label{sec.exp}
This section presents several real-data experiments in repeated first-price auctions based on practical bidding data, where the hint is the context-based prediction provided by blackbox machine learning models. For business confidentiality we do not disclose further information about the datasets.


Our experiments are run on auction datasets from the first-price auctions
on real-world sites, with around 0.38 million data points, where each data point is a quadruple of scalars ($v_t$, $m_t$, $h_t$, $\sigma_t$). Here $v_t$ is
the private value and $m_t$ is the minimum bid to win for each auction. The private value $v_t$
is computed by Verizon Media based on an independent learning scheme not relying on the auction,
and is therefore taken as given. The quantity $m_t$
is the minimum bid to win and is returned
by the platform after each auction, which is by definition the other bidders’ highest bid (possibly
including the seller’s reserve price and measured up to 1 cent). These datasets have already been
pruned to only contain data points with $v_t > m_t$, for otherwise the bidder never wins regardless of
her bids. Hints $h_t$ and its accuracy $\sigma_t$ are provided by fitting a lognormal model using other contextual information. 

We did two parts of experiments. In the first part, we applied the main insight in Section~\ref{part.one} that for fixed private value, given extra information of hints helps reduce the regret for optimal policy, and hint intervals outperforms single hints, as long as $q > 1$. We allocate all data points to separate bins according to the private value $v_t$ and each bin is a subproblem described in Section~\ref{part.one}.
Figure~\ref{fig.experiment1} compares the results of binned  exponential weighting  (without hint) and whether hint intervals or single hints are given. We can observe although the performance of hints itself is obviously bad, by incorporating hint into the learning method, we could improve the cumulative reward by 2.96\% if knowing point estimation only, and by 3.55\% if an interval is provided in each round. 



We also provide a polynomial time algorithm for conducting the meta algorithm we proposed before. Even with $T^K$ experts, we may use dynamic programming to achieve a space complexity $O(D K)$ and time complexity $O(D K T)$, where $D$ is parameter for the discretization on private values. We implemented Algorithm~\ref{algo.2} on dataset with support size $K = 5$.  

\begin{figure}[htbp]
\centering 
\subfigure[] 
{
	\begin{minipage}{6cm}\label{fig.experiment1}
	\centering         
	\includegraphics[scale=0.46]{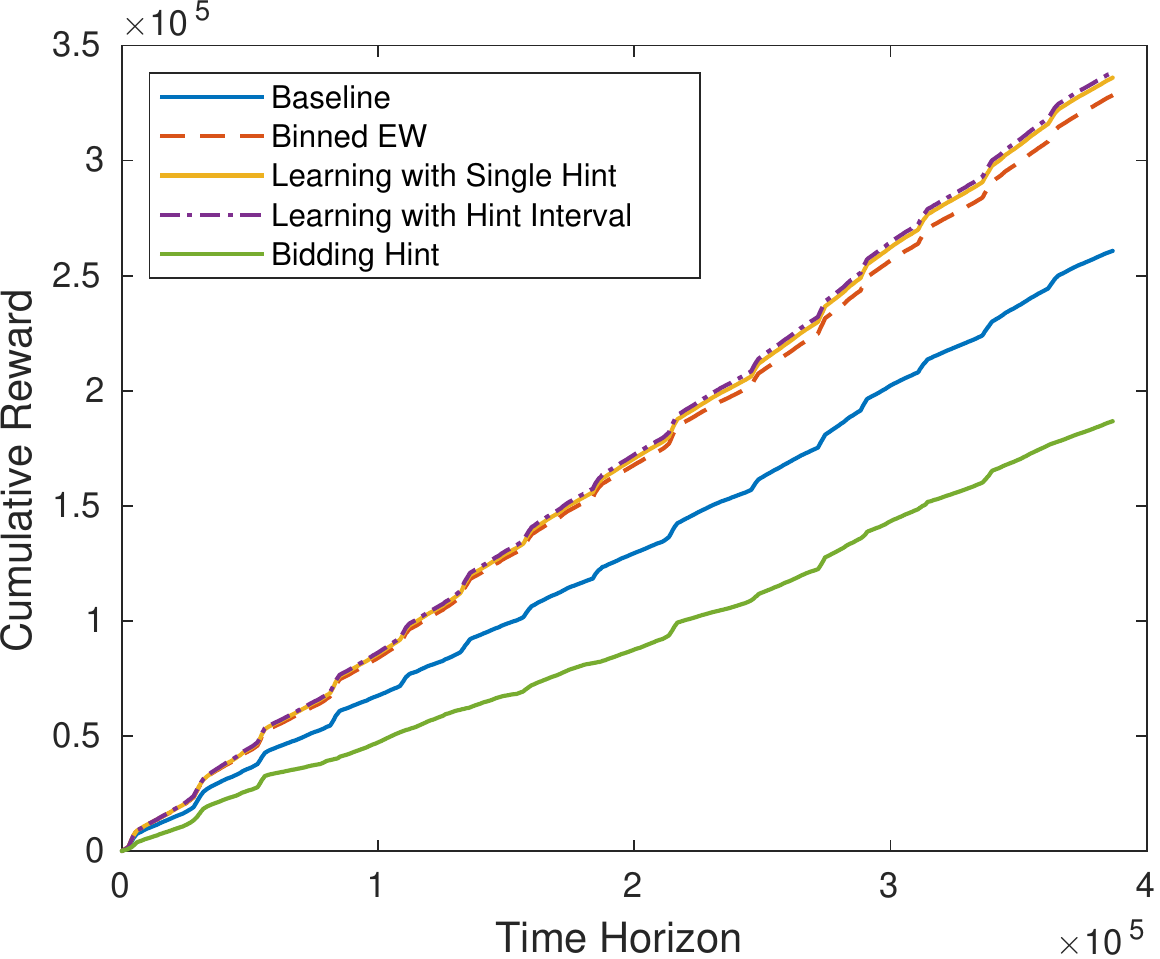}
	\end{minipage}
}
\subfigure[] 
{
	\begin{minipage}{6cm}\label{fig.experiment2}
	\centering      
	\includegraphics[scale=0.46]{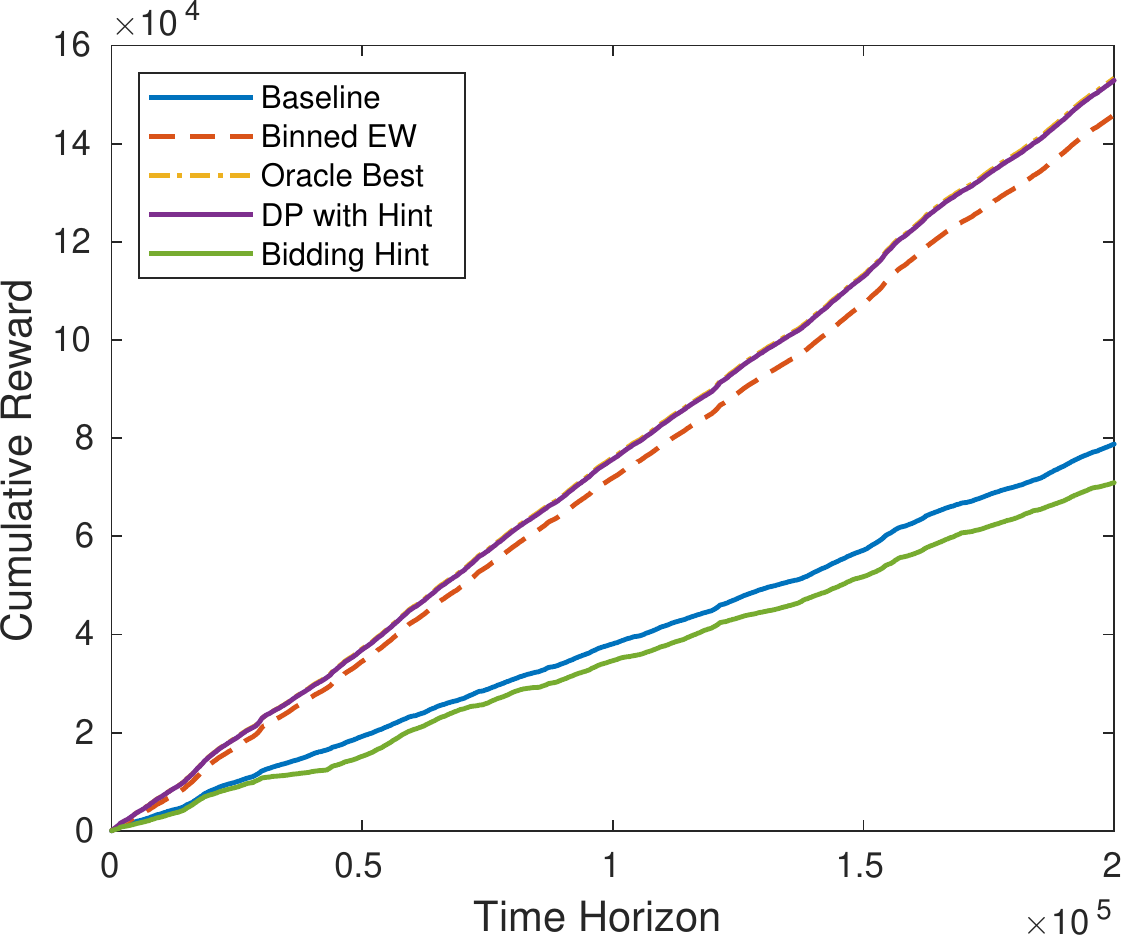}
	\end{minipage}
}
\caption{Cumulative rewards as a function of time, based on Algorithms \ref{algo.thm1} and \ref{algo.2}. Panel (a) uses a binning method to show the improvement by incorporating hints. The yellow solid line corresponds to learning with single hints, while the dashdot line corresponds to hint intervals. As comparisons, we also include a baseline algorithm (blue) provided by Verizon Media, the simple exponential weights algorithm without hints (red), and just bidding the hint (green). Panel (b) illustrates the performance of our Algorithm \ref{algo.2} (purple solid line), which almost coincides with the performance of the best increasing and 1-Lipschitz function for the oracle (yellow dashed line). The other algorithms are the same as Panel (a).} 
\label{fig:experiment}  
\end{figure}

\section{Conclusion}

In this work we study the overall regret for a particular bidder without further assumption for other bidders in repeated first-price auction. We target at the case when an additional information is given to only this bidder at each round.  We show that even in a simple setting, there is a strict gap between two different forms of hints, where in either one the upper bound and lower bound matches with regard to log factor.
We further consider the case when others' highest bid lies in finite support, and provide modified algorithm as well as matching regret upper and lower bounds. While not knowing the critical parameters for the whole game in the beginning, our adaptive algorithm always achieves the best among three different upper bounds.
Finally, we appreciate data from company that corresponds with our framework, and carry out two parts of experiments on it. The first one is mainly to show the performance improvement with the information given, and in the second part, we provide a way to implement our algorithm in polynomial time and show corresponding results.

\section{Acknowledgements}
This work was supported in part by NSF awards CCF-2106467 and CCF-2106508. Zhengyuan Zhou acknowledges the generous
support from New York University’s Center for Global Economy and Business faculty research grant.
We would like to thank the NeurIPS 2022 reviewers for their constructive feedback and Eric Ordentlich for helpful discussions in the early stage of this work. 

\clearpage 

\bibliographystyle{alpha}
\bibliography{di.bib}

\clearpage 

\begin{enumerate}

\item For all authors...
\begin{enumerate}
  \item Do the main claims made in the abstract and introduction accurately reflect the paper's contributions and scope?
    \answerYes{Results are stated and properly qualified.}
  \item Did you describe the limitations of your work?
    \answerYes{}
  \item Did you discuss any potential negative societal impacts of your work?
    \answerNo{}
  \item Have you read the ethics review guidelines and ensured that your paper conforms to them?
    \answerYes{}
\end{enumerate}

\item If you are including theoretical results...
\begin{enumerate}
  \item Did you state the full set of assumptions of all theoretical results?
    \answerYes{All theorem's have clearly stated assumptions}
        \item Did you include complete proofs of all theoretical results?
    \answerYes{All proofs are detailed in the Appendix}
\end{enumerate}

\item If you ran experiments...
\begin{enumerate}
  \item Did you include the code, data, and instructions needed to reproduce the main experimental results (either in the supplemental material or as a URL)?
    \answerYes{Pseudocode is included but data is confidential}
  \item Did you specify all the training details (e.g., data splits, hyperparameters, how they were chosen)?
    \answerYes{}
        \item Didou report error bars (e.g., with respect to the random seed after running experiments multiple times)?
    \answerNo{}
        \item Did you include the total amount of compute and the type of resources used (e.g., type of GPUs, internal cluster, or cloud provider)?
    \answerYes{}
\end{enumerate}

\item If you are using existing assets (e.g., code, data, models) or curating/releasing new assets...
\begin{enumerate}
  \item If your work uses existing assets, did you cite the creators?
    \answerNA{}
  \item Did you mention the license of the assets?
    \answerNA{}
  \item Did you include any new assets either in the supplemental material or as a URL?
    \answerNA{}
  \item Did you discuss whether and how consent was obtained from people whose data you're using/curating?
    \answerNA{}
  \item Did you discuss whether the data you are using/curating contains personally identifiable information or offensive content?
    \answerNA{}
\end{enumerate}

\item If you used crowdsourcing or conducted research with human subjects...
\begin{enumerate}
  \item Did you include the full text of instructions given to participants and screenshots, if applicable?
    \answerNA{}
  \item Did you describe any potential participant risks, with links to Institutional Review Board (IRB) approvals, if applicable?
    \answerNA{}
  \item Did you include the estimated hourly wage paid to participants and the total amount spent on participant compensation?
    \answerNA{}
\end{enumerate}

\end{enumerate}
\clearpage

\appendix

\section{Pseudocode of Algorithm 2}\label{appendix.sudocode}
\begin{algorithm}[h!]
	\caption{Meta-Expert Learning Algorithm\label{algo.1}}
	\textbf{Input:} Time horizon $T$; support size $K$; accuracy of  error information $(\sigma_1,\cdots,\sigma_T)$; norm parameter $q\in [1,\infty]$.\\
	\textbf{Output:} A bidding policy $\pi$. \\
	\textbf{Initialization:}
	Construct $T^K$ base experts $\{f_i\}$ ($i = 1, 2, \cdots, T^K$) that cover the oracle with cumulative reward difference at most $O(1)$ (as in the proof of Theorem 5);\\
	\For{$i = 1, 2, \cdots T^K$}{	Initialize $R_{0, f_i} \leftarrow 0$;\\}
	Initialize $R_{0, h}\leftarrow 0$, $R_{0, g} \leftarrow 0$, $R_{0, f}\leftarrow 0$;\\
    Initialize $L_0 \leftarrow 0$ ; \\
	\For{$t\in\{1,2,\cdots,T\}$}{
		The bidder receives private value $v_t \in [0, 1]$; \\
		Set learning rate $\eta_{t, 1}\leftarrow \min \left\{\frac{1}{4}, \sqrt{\frac{K\log T}{L_{t}}}\right\}$;\\ 

		The bidder observes hint $h_t \in [0,1]$, along with its accuracy $\sigma_t$;\\
		$L_t \leftarrow L_{t-1} + \sigma_t^{\frac{q}{q+1}}$;\\
		Set $b_{t, h}\leftarrow h_t + \sigma_t^{\frac{q}{q+1}}$; \\
		Sample $b_{t, g}$ according to ChEW policy;\\
		\For{$i = 1, 2, \cdots, T^K$}{
		Let $b_{t,f}\leftarrow f_i(v_t)$ with probability $$p_{t, i} := \frac{\text{exp}\left(\eta_{t,1}R_{t-1,f_i}\right)}{\text{exp}\left(\eta_{t,1}R_{t-1,h}\right) + \sum_{i'=1}^{T^K} \text{exp}\left(\eta_{t,1}R_{t-1,f_{i'}}\right)}.$$
		}
		Let $b_{t,f} \leftarrow b_{t, h}$ with probability $p_{t, T^K + 1}:=\frac{\text{exp}\left(\eta_{t,1}R_{t-1,h}\right)}{\text{exp}\left(\eta_{t,1}R_{t-1,h}\right) + \sum_{i'=1}^{T^K} \text{exp}\left(\eta_{t,1}R_{t-1,f_{i'}}\right)}$;\\
		Sample $b_{t, f}\sim p_{t}$ ;\\
			Set learning rate $\eta_{t, 2}\leftarrow \min \left\{\frac{1}{4}, \sqrt{\frac{\log 3}{ L_{t} }}\right\}$;\\ 
		\For{$ i \in \{f,g, h\}$}{
		$$P_{t, i} = \frac{\text{exp}\left(\eta_{t,2}R_{t-1,i}\right)}{\sum_{i' \in \{f,g,h\}} \text{exp}\left(\eta_{t,2}R_{t-1,i'}\right)};$$}
		The bidder samples policy $i \sim P_t$ and bids $b_{t, i}$;\\
		The bidder receives others' highest bid $m_t$; \\
		\For{$i= 1, 2, \cdots, T^K$}{
			$$
			R_{t, f_i}\leftarrow R_{t-1, f_i} + r(f_i(v_t);v_t, m_t). $$
		}
		\For{$ i \in \{f, g, h\}$}{Update $R_{t, i}\leftarrow R_{t-1, i}+ r(b_{t, i}; v_t, m_t)$;\\}
	}
	
\end{algorithm}

The algorithm has a tree structure with the nodes in the upper layer representing algorithms instead of specific oracles. In Algorithm~\ref{algo.1}, the upper nodes are respectively: the algorithm that achieves the regret upper bound in Theorem \ref{thm.thm5} described in Appendix~\ref{appendix.thm5}, ``ChEW'' algorithm to achieve $\widetilde{O}(\sqrt{T})$ regret bound proposed in \cite{han2020learning}, and a single expert which bids $h_t + \sigma_t^{q/(q+1)}$ each time.
The probability distribution $P_{t,i}$ runs the multiplicative weights update on the above strategies (see details in Appendix~\ref{appendix.algo2}). 


\section{Proof of Main Result in Section~\ref{part.one}}
\subsection{Proof of Regret Upper Bounds in Theorem~\ref{thm.1} and Theorem~\ref{thm.2} }\label{appendix.1}
\subsubsection{Proof of Upper Bound in Theorem~\ref{thm.1}.}\label{appendix.a1}

We prove a slightly stronger result than Theorem 1:
\begin{lemma}\label{lemma.proofthm1}
If $v_t \equiv 1$ and the bidder observes $\sigma_t$ at each time $t$, then the following regret upper bound holds for Algorithm \ref{algo.thm1}: 
\begin{align}
     \sup\limits_{\{m_t, h_t, \sigma_t\}} \text{\rm Reg}(\pi_1) = O\left(\log T+ \sqrt{  \log T \cdot \sum_{t=1}^T \sigma_t^{\frac{q}{q+1}}}\right), \notag
\end{align}
with $\text{\rm Reg}(\pi)$ defined in~(\ref{def.regret}), and the supremum is taken over all $m_t$ sequences and hints that satisfy~(\ref{def.sigma}), and the infimum is taken over all possible policies $\pi$. 
\end{lemma}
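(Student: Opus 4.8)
The plan is to reduce Lemma~\ref{lemma.proofthm1} to the standard analysis of multiplicative weights in the presence of a single ``good expert'' in the sense of \cite[Lemma 1]{han2020learning}, with the hint expert $a^*$ playing that role and $\sigma_t^{q/(q+1)}$ serving as its per-round goodness parameter. \textbf{Step 1 (discretization).} Since $v_t\equiv 1$, every oracle in $\mathcal{F}_{\text{L-M}}$ bids a constant $b^\star=a(1)\in[0,1]$, so the comparator in~\eqref{def.regret} is $\max_{b^\star\in[0,1]}\sum_t (1-b^\star)\mathbbm{1}\{b^\star\ge m_t\}$. For any such $b^\star$, the base expert bidding $\lceil b^\star T\rceil/T\ge b^\star$ wins whenever $b^\star$ does and underpays by at most $1/T$, so its cumulative reward is within $T\cdot(1/T)=O(1)$ of the comparator. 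Hence it suffices to bound the regret of Algorithm~\ref{algo.thm1} against the best of its $K=T+1$ experts, up to an additive $O(1)$.

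\textbf{Step 2 (the hint expert is good).} I would show that $a^*$, which bids $b_{t,a^*}=h_t+\sigma_t^{q/(q+1)}$, is a good expert with parameter $\gamma_t:=\sigma_t^{q/(q+1)}$. First, Markov's inequality applied to the moment bound~\eqref{def.sigma} at the threshold $\sigma_t^{q/(q+1)}$ gives
\[
\mathbb{P}\!\left[m_t> h_t+\sigma_t^{q/(q+1)}\right]\le \frac{\mathbb{E}\!\left[|h_t-m_t|^q\right]}{\sigma_t^{q^2/(q+1)}}\le \sigma_t^{q/(q+1)},
\]
so $a^*$ wins (bids above $m_t$) except with probability at most $\gamma_t$. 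Second, by Jensen $\mathbb{E}|h_t-m_t|\le\sigma_t\le\sigma_t^{q/(q+1)}$ (we may take $\sigma_t\le 1$ since $|h_t-m_t|\le 1$), so conditioned on winning $a^*$ overpays relative to the per-round optimal bid $m_t$ by $h_t+\sigma_t^{q/(q+1)}-m_t$, which is $O(\gamma_t)$ in expectation. Thus $a^*$ captures the per-round optimum $1-m_t$ up to $O(\gamma_t)$, realizing the $(\overline m_t-\underline m_t)$-goodness described after Algorithm~\ref{algo.thm1} with width $\overline m_t-\underline m_t=\Theta(\sigma_t^{q/(q+1)})$.

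\textbf{Step 3 (good-expert multiplicative weights).} With a good expert of parameter $\gamma_t$ in the pool, I would invoke \cite[Lemma 1]{han2020learning} together with the adaptive schedule $\eta_t=\min\{1/4,\sqrt{\log K/\sum_{s\le t}\gamma_s}\}$ used in Algorithm~\ref{algo.thm1}. The good expert anchors the algorithm's mass on the winning, hence reward-smooth, experts, which controls the second-order (variance) term of the exponential-weights regret by $\gamma_t$ rather than by the trivial $O(1)$. The telescoping estimate $\sum_t \gamma_t/\sqrt{\sum_{s\le t}\gamma_s}\le 2\sqrt{\sum_t\gamma_t}$ then balances the two contributions $\log K/\eta_T$ and $\sum_t\eta_t\gamma_t$, yielding a regret against the best expert of
\[
O\!\left(\log K+\sqrt{\log K\textstyle\sum_{t}\gamma_t}\right)=O\!\left(\log T+\sqrt{\log T\textstyle\sum_{t}\sigma_t^{q/(q+1)}}\right),
\]
where the additive $\log K$ absorbs the regime in which the cap $\eta_t=1/4$ is active. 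Adding the $O(1)$ of Step~1 proves the lemma; Theorem~\ref{thm.1} follows at once from the H\"older bound $\sum_t\sigma_t^{q/(q+1)}\le T^{1/(q+1)}L^{q/(q+1)}$.

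\textbf{Main obstacle.} The delicate point is Step~3. A naive decomposition that first pays the full reward gap between the best oracle and the hint expert and then adds the multiplicative-weights overhead would incur $O(\sum_t\sigma_t^{q/(q+1)})$, which is far larger than the target $\sqrt{\log T\sum_t\sigma_t^{q/(q+1)}}$ once the hints are informative. Avoiding this loss requires using the good expert \emph{inside} the regret analysis to damp the variance term, and carefully folding the high-probability (rather than sure) winning of $a^*$ into the effective goodness parameter $\gamma_t$.
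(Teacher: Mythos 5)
Your proposal is correct and follows essentially the same route as the paper: discretize to $T+1$ experts, show via Markov's inequality and the moment condition \eqref{def.sigma} that the hint expert bidding $h_t+\sigma_t^{q/(q+1)}$ is within $O(\sigma_t^{q/(q+1)})$ of the per-round optimum in expectation, and run the adaptive-learning-rate exponential-weights analysis in which the second-order term is measured relative to the hint expert (the paper implements your ``good expert damps the variance'' idea concretely by translating all rewards by $-r_{t,a^*}$ so that $\var(X_t)\le r_t^*-\mathbb{E}[X_t]$ with $\mathbb{E}[r_t^*]\le 3\sigma_t^{q/(q+1)}$), finishing with the telescoping bound $\sum_t a_t/\sqrt{\sum_{s\le t}a_s}\le 2\sqrt{\sum_t a_t}$. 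You correctly identify the key obstacle and its resolution, though your Step 3 delegates to the good-expert lemma the summation-by-parts control of the residual term $-\sum_t\eta_t\mathbb{E}[X_t]$ that the paper carries out explicitly.
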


\begin{proof}
The following is similar to proof of Theorem 3 in \cite{han2020learning}. As in the standard analysis of multiplicative weights \cite{cesa2006prediction}, define:
\begin{align}
    \phi_t = \frac{1}{K} \sum_{a=1}^K\text{exp}\left(\eta_t \cdot  \sum_{s<t}r_{s,a}\right), \quad t = 1, \ldots, T+1. \notag
\end{align}
Recall that $K=T+1$ and $a^*$ is the extra expert. We translate every $r_{t,a}$ by $- r_{t, a^*}$ to ensure that $r_{t,a}\in [-1,1]$ and $r_{t,a^*}=0$. Then for $t \in [T]$, Jensen's inequality with $\eta_t/\eta_{t+1}\ge 1$ gives
\begin{align*}
\left(\phi_{t+1}\right)^{\frac{\eta_{t}}{\eta_{t+1}}}&=\left[ \frac{1}{K} \sum_{a=1}^K\text{exp}\left(\eta_{t+1} \cdot  \sum_{s<t+1}r_{s,a}\right)\right]^{\frac{\eta_{t}}{\eta_{t+1}}} \\
&\le \frac{1}{K} \sum_{a=1}^K\left[\text{exp}\left(\eta_{t+1} \cdot  \sum_{s<t+1}r_{s,a}\right)^{\frac{\eta_{t}}{\eta_{t+1}}}\right] \\
    &= \phi_t\sum_{a = 1}^K p_{t, a} \cdot \text{exp}\left(\eta_{t} \cdot  r_{t, a}\right) =: \phi_t \mathbb{E}[\text{exp}\left(\eta_t X_t\right)]. 
\end{align*}
Here $X_t$ is a random variable that takes value $r_{t, a}$ with probability $p_{t, a}$. Now using Bernstein’s inequality
\begin{align}
    \mathbb{E}[\exp(\lambda X)]\le \exp\left(\lambda \mathbb{E}[X]+(e^{\lambda} - \lambda - 1)\var(X)\right), \notag
\end{align}
with $|X-\mathbb{E}[X]|\le 1$ almost surely, we have
\begin{align}
    \frac{\log \phi_{t+1}}{\eta_{t+1}} -\frac{\log \phi_{t}}{\eta_{t}} \le \mathbb{E}[X_t]+ \frac{e^{\eta_t} - \eta_t - 1}{\eta_t}\var(X_t) \le \mathbb{E}[X_t]+ \eta_t\var(X_t), \notag
\end{align}
where the last inequality is due to $e^x-x-1\le x^2$ for $x\in [0,1]$. Define $r^*_t := \max \limits_{a\in [K]}r_{t, a}$, we have
\begin{align}
    \var(X_t) \le \mathbb{E}[(r^*_t - X_t)^2]\le 1\cdot \mathbb{E}[r^*_t - X_t] = r^*_t - \mathbb{E}[X_t]. \notag
\end{align}
By telescoping and defining $\eta_{T+1}:=\eta_T$, 
\begin{align}\label{align.4}
   \frac{\log \phi_{T+1}}{\eta_{T}}=\sum_{t=1}^T \left[\frac{\log \phi_{t+1}}{\eta_{t+1}} -\frac{\log \phi_{t}}{\eta_{t}}\right] \le 
    \sum_{t=1}^T\mathbb{E}[X_t]+  \sum_{t=1}^T\eta_t\left( r_t^*  - \mathbb{E}[X_t]\right).  
    \end{align}
For the left-hand side of \eqref{align.4}, we also have
\begin{align}\label{align.5}
   \log \phi_{T+1} \ge \eta_{T} \cdot \max \limits_{a\in [K]}\sum_{s=1}^T r_{t,a} - \log K.  \end{align}
   Combining (\ref{align.4}) and (\ref{align.5}),
   \begin{align} 
   \max \limits_{a\in [K]}\sum_{t=1}^T r_{t,a} &\le 
    \frac{\log K}{\eta_{T}} +  \sum_{t=1}^T\left(1 - \eta_t\right) \cdot\mathbb{E}[X_t]+ \sum_{t=1}^T \eta_t\cdot r_t^*.   \label{a}
\end{align}
Rearranging \eqref{a} leads to the following upper bound on the cumulative regret: 
\begin{align}\label{eq:regret_intermediate}
\max_{a\in [K]} \sum_{t=1}^T r_{t,a} - \sum_{t=1}^T \bE[X_t] \le \frac{\log K}{\eta_{T}} + \sum_{t=1}^T \eta_t r_t^* - \sum_{t=1}^T \eta_t \cdot \bE[X_t]. 
\end{align}
Let $V_T:=(\log K)/\eta_{T} + \sum_{t=1}^T \eta_t r_t^*$, it remains to upper bound the last term of \eqref{eq:regret_intermediate}. To do so, note that \eqref{a} holds for any intermediate value of $t\in [T]$ as well. Since $\max_{a\in [K]}\sum_{t=1}^T r_{t,a}\ge \sum_{t=1}^T r_{t,a^*}=0$, for every $t\in [T]$ we have
\begin{align*}
S_t := \sum_{s=1}^t (1-\eta_s)\cdot \bE[X_s] \ge -\frac{\log K}{\eta_{t+1}} - \sum_{s=1}^t \eta_s\cdot r_s^* = -V_t \ge -V_T, 
\end{align*}
where the last inequality is due to $\eta_{t+1}\ge \eta_{T}$ and $r_t^*\ge r_{t,a^*}=0$ for every $t\in [T]$. Consequently, 
\begin{align*}
-\sum_{t=1}^T \eta_t \cdot \bE[X_t] &= -\sum_{t=1}^T (S_t - S_{t-1})\cdot \frac{\eta_t}{1-\eta_t}  \\
    &= -\sum_{t=1}^{T-1}S_t\cdot \left(\frac{\eta_t}{1-\eta_t} - \frac{\eta_{t+1}}{1-\eta_{t+1}}\right) - S_T\cdot \frac{\eta_T}{1-\eta_T}  \\
    &\le V_T\sum_{t=1}^{T-1} \left(\frac{\eta_t}{1-\eta_t} - \frac{\eta_{t+1}}{1-\eta_{t+1}}\right) + V_T\cdot \frac{\eta_T}{1-\eta_T} \\
    &= \frac{V_T\eta_1}{1-\eta_1} \le V_T, 
\end{align*}
where we have used that $1/4\ge \eta_1\ge \eta_2 \ge \ldots \ge \eta_T>0$. Plugging this inequality back into \eqref{a} gives
\begin{align}\label{eq:regret_final}
\max_{a\in [K]} \sum_{t=1}^T r_{t,a} - \sum_{t=1}^T \bE[X_t] \le 2V_T. 
\end{align}

Finally it remains to upper bound $\bE[V_T]$, where the expectation is taken with respect to the randomness in the hint sequence $\{h_t\}_{t=1}^T$. Since the definition of the expert $a^*$ gives that
\begin{align*}
    r_t^* &\le (1 - m_t) - (1-h_t-\sigma_t^{q/(q+1)})\mathbbm{1}(h_t + \sigma_t^{q/(q+1)}\ge m_t)\\
    &\le \begin{cases}
    h_t + \sigma_t^{q/(q+1)} - m_t &\text{if } h_t + \sigma_t^{q/(q+1)} \ge m_t \\
    1&\text{if } h_t + \sigma_t^{q/(q+1)} < m_t 
    \end{cases}, 
\end{align*}
we conclude that
\begin{align*}
\bE[r_t^*] &\le \bP(h_t + \sigma_t^{q/(q+1)} < m_t) + \bE[|h_t  + \sigma_t^{q/(q+1)} - m_t|] \\
&\le \frac{\bE[|h_t - m_t|^q]}{(\sigma_t^{q/(q+1)})^q} + \left(\bE[|h_t - m_t|^q]\right)^{1/q} + \sigma_t^{q/(q+1)} \\
&\le 2\sigma_t^{q/(q+1)} + \sigma_t \le 3\sigma_t^{q/(q+1)}. 
\end{align*}
Therefore, 
\begin{align*}
\bE[V_T] &\le \frac{\log K}{\eta_{T}} + \sum_{t=1}^T \eta_t\bE[r_t^*] \\
&\le 4\log K + \sqrt{\sum_{t=1}^T \sigma_t^{q/(q+1)}\log K} + 3\sum_{t=1}^T \sqrt{\frac{\log K}{\sum_{s\le t} \sigma_s^{q/(q+1)}}}\cdot \sigma_t^{q/(q+1)} \\
&\le 4\log K + 7\sqrt{\sum_{t=1}^T \sigma_t^{q/(q+1)}\log K}, 
\end{align*}
where the last inequality follows from
\begin{align*}
\sum_{i=1}^n \frac{a_i}{\sqrt{\sum_{j\le i}a_j}} \le \sum_{i=1}^n \int_{\sum_{j\le i-1} a_j }^{\sum_{j\le i} a_j} \frac{\text{d}x}{\sqrt{x}} = \int_0^{\sum_{i=1}^n} \frac{\text{d}x}{\sqrt{x}} = 2\sqrt{\sum_{i=1}^n a_i}
\end{align*}
for any non-negative reals $a_1,\cdots,a_n$. Plugging the above upper bound of $\bE[V_T]$ into \eqref{eq:regret_final} completes the proof of the lemma. 
\end{proof}

Theorem~\ref{thm.1} follows from Lemma \ref{lemma.proofthm1} and the following Jensen's inequality: 
\begin{align}
   \sqrt{  \log T \cdot \sum_{t=1}^T \sigma_t^{\frac{q}{q+1}}}
    \le 
   \sqrt{\log T \cdot T \cdot \left(\frac{\sum_{t=1}^T \sigma_t}{T}\right)^{\frac{q}{q+1}}} \le \sqrt{\log T\cdot L^{q/(q+1)}\cdot T^{1/(q+1)}}. \notag 
\end{align}

\subsubsection{Proof of Upper Bound in Theorem~\ref{thm.2}.}\label{appendix.a2}

To achieve the upper bound of Theorem \ref{thm.2}, we construct the same $T$ base experts as Algorithm \ref{algo.thm1}, as well as $T$ additional experts who bid $h_t + i/T, i\in [T]$ at each time $t$. Then at an additional $O(1)$ cost in the final regret, the additional experts include an expert who bids $h_t + \sqrt{L/T}$ at each time $t$. Using the same analysis in the proof of Lemma \ref{lemma.proofthm1}, this algorithm achieves a regret upper bound
\begin{align*}
\text{Reg}(\pi_2) \le 2\left(\frac{\log(2T)}{\eta} + \eta\cdot \sum_{t=1}^T \bE[r_t^*]\right), 
\end{align*}
where $\eta>0$ is a fixed learning rate, and
\begin{align*}
r_t^* \le \begin{cases}
    h_t + \sqrt{L/T} - m_t &\text{if } h_t + \sqrt{L/T} \ge m_t \\
    1&\text{if } h_t + \sqrt{L/T} < m_t 
    \end{cases}. 
\end{align*}
Consequently, 
\begin{align*}
\sum_{t=1}^T \bE[r_t^*] &\le \sum_{t=1}^T \bE[|h_t + \sqrt{L/T} - m_t|] + \sum_{t=1}^T \bP(h_t + \sqrt{L/T} < m_t) \\
&\le \sqrt{LT} + \bE\left[ \sum_{t=1}^T |h_t - m_t| \right] + \frac{1}{\sqrt{L/T}}\bE\left[ \sum_{t=1}^T |h_t - m_t| \right] \\
&\le 2\sqrt{LT} + L \le 3\sqrt{LT}, 
\end{align*}
as $1\le L\le T$. Now choosing $\eta = \min\{1/4, \sqrt{(\log T)/\sqrt{LT}} \}$ leads to the regret upper bound  $O\left(\left(\log T\right)^{\frac{1}{2}} \left( T\cdot L\right)^{\frac{1}{4}}\right)$.

\subsection{Proof of Regret Lower Bounds in Theorem~\ref{thm.1} and Theorem~\ref{thm.2} }\label{appendix.2}

\subsubsection{Proof of Lower Bound in Theorem~\ref{thm.1}.}\label{appendix.b1}

\begin{proof}
We use Le Cam’s Two-Point method.
Construct hint and minimum bid to win as follows: Let $h_t = \frac{1}{2}$,  $t  = 1, \ldots , T$ and $\sigma_t$ be the same for all $t$ such that $\sigma^{\frac{q}{q+1}} \le \frac{1}{4}$. Consider the following two CDFs for $m_t \in [0,1]$:

\begin{align}
G_1(x) = \left\{
\begin{aligned}
&0, &\text{if $0<x< \frac{1}{2}$} \\
& 2\cdot(1-\bar{x} + \delta), &\text{if $\frac{1}{2}<x< \bar{x}$} \\
&1, &\text{if $\bar{x}<x< 1$}  
\end{aligned}
\right.
, \quad
G_2(x) = \left\{
\begin{aligned}
&0, &\text{if $0<x< \frac{1}{2}$} \\
& 2\cdot(1-\bar{x} - \delta), &\text{if $\frac{1}{2}<x< \overline{x}$} \\
&1, &\text{if $\overline{x}<x< 1$}  
\end{aligned}
\right.
, \notag
\end{align}
where $\bar{x}:= \frac{1}{2} + \frac{1}{2}\cdot \sigma^{\frac{q}{q+1}}$ and let $\delta < \frac{1}{2} \cdot \sigma^{\frac{q}{q+1}}$. Easy to observe the above construction satisfies:
\begin{align}
    \mathbb{E}[|m_t - h_t|^q] \le 2\cdot(\frac{1}{2}\cdot \sigma^{\frac{q}{q+1}} + \delta) \cdot \left(\frac{1}{2}\cdot \sigma^{\frac{q}{q+1}}\right)^q \le \sigma^{\frac{q}{q+1}} \cdot \left(\sigma^{\frac{q}{q+1}}\right)^q =  \sigma^q. \notag
\end{align}
Let $r_1(v_t, b_t)$ and $r_2(v_t, b_t)$ be the expected instantaneous reward under CDFs $G_1$ and $G_2$. Then under the above construction: 
\begin{align}
    & \max \limits_{b \in [0, 1]} r_1(1, b) = r_1(1, \frac{1}{2}) = \frac{1}{2}\cdot \frac{1- \bar{x} + \delta}{1 - \frac{1}{2}} = 1 - \bar{x} + \delta, \notag  \\
   &  \max \limits_{b \in [0, 1]} r_2(1, b) = r_2(1, \bar{x}) = 1 - \bar{x}, \notag \\
   & \max \limits_{b \in [0, 1]} (r_1(1, b) + r_2(1, b)) = r_1(1, \bar{x}) + r_2(1, \bar{x})  =2\cdot (1 - \bar{x}) . \notag 
\end{align} 
Therefore, for any $b_t \in [0, 1]$,
\begin{align}
    & \left(\max \limits_{b \in [0, 1]} r_1(1, b) -  r_1(1, b_t)\right) + \left(\max \limits_{b \in [0, 1]} r_2(1, b) -  r_2(1, b_t)\right) \notag \\
    & \ge \left(\max \limits_{b \in [0, 1]} r_1(1, b)\right) + \left(\max \limits_{b \in [0, 1]} r_2(1, b)\right) - \max \limits_{b \in [0, 1]} (r_1(1, b) + r_2(1, b)) \notag \\
    & =  (1 - \bar{x} + \delta) + (1 - \bar{x}) - 2\cdot (1 - \bar{x}) = \delta. \notag 
\end{align}
Thus we have for any policy $\pi$,
\begin{align}\label{align.8}
  \mathop{\mathrm{sup}}\limits_{G} \text{\rm Reg}(\pi) &\ge  \frac{1}{2} \mathbb{E}_{G_1}[\text{\rm Reg}(\pi)] +  \frac{1}{2} \mathbb{E}_{G_2}[\text{\rm Reg}(\pi)] \notag \\
  &= \frac{1}{2} \sum_{t=1}^T \left(\mathbb{E}_{P_1^t}\left[\max \limits_{b \in [0, 1]} r_1(1, b) -  r_1(1, b_t)\right] +  \mathbb{E}_{P_2^t}\left[\max \limits_{b \in [0, 1]} r_2(1, b) -  r_2(1, b_t)\right]\right) \\
    & \ge \frac{1}{2}\sum_{t=1}^T \delta \cdot \int \min \{d P_1^t, d P_2^t \} \notag \\
    & \ge  \frac{1}{2}\sum_{t=1}^T \delta\cdot \left(1-\|P_{1}^t- P_{2}^t\|_{\text{TV}}\right) \notag \\
    &\ge \frac{1}{2}T \delta\cdot \left(1-\|P_{1}^T- P_{2}^T\|_{\text{TV}}\right),  \notag
\end{align}
 where $b_t$ in (\ref{align.8}) denotes the bid of the oracle chosen by policy $\pi$ at time $t$ and $P^t_i$ ($i\in \{1, 2\}$) denotes the distribution of all observables $(m_1,\ldots , m_{t-1})$ at the beginning of
time $t$. The KL divergence:
\begin{align}
    D_{\text{\rm KL}}(P_{1}^T\|P_{2}^T)  &= (T-1)\cdot  D_{\text{\rm KL}}(G_1\|G_2) \notag \\
    &=(T-1)\cdot \left(2\cdot\left(1- \bar{x} + \delta\right)\cdot \log \frac{1- \bar{x} + \delta}{1- \bar{x} - \delta} + 2\cdot\left( \bar{x} -\frac{1}{2}- \delta\right)\cdot \log \frac{\bar{x} -\frac{1}{2}- \delta}{\bar{x} -\frac{1}{2}+ \delta}\right) \notag \\
     &\le (T-1)\cdot \left( 2\cdot \left(1- \bar{x} + \delta\right)\cdot \left( \frac{1- \bar{x} + \delta}{1- \bar{x} - \delta}-1\right) +2\cdot \left( \bar{x} -\frac{1}{2}- \delta\right)\cdot \left( \frac{\bar{x} -\frac{1}{2}- \delta}{\bar{x} -\frac{1}{2}+ \delta}-1\right)\right) \notag \\
    &= 4\cdot \delta\cdot(T-1)\cdot \left(\frac{1- \bar{x} + \delta}{1- \bar{x} - \delta} - \frac{ \bar{x} -\frac{1}{2}- \delta}{\bar{x} -\frac{1}{2}+ \delta}\right) \notag \\
    &\le  \frac{4T\cdot \delta^2}{(\bar{x} -\frac{1}{2}+ \delta)(1- \bar{x} - \delta)} \notag \\
    &\le  \frac{16T\cdot \delta^2}{\frac{1}{2}\cdot \sigma^{\frac{q}{q+1}} + \delta}. \notag \\
    & \le \frac{32T \cdot \delta^2}{\sigma^{\frac{q}{q+1}}}. \notag 
\end{align}
 Taking the separation parameter $\delta = \min\left\{\frac{1}{2}\cdot\sigma^{\frac{q}{q+1}}, \frac{1}{8}\cdot \sigma^{\frac{q}{2(q+1)}}\cdot T^{-\frac{1}{2}}\right\}$ and substituting into (6) leads to the regret lower bound  in Theorem 1: $$\Omega\left(\sqrt{T \sigma^{\frac{q}{q+1}}}\right) = \Omega\left(\sqrt{L^{\frac{q}{q+1}}\cdot T^{\frac{1}{q+1}}}\right).$$
\end{proof}

\subsubsection{Proof of Lower Bound in Theorem~\ref{thm.2}.}\label{appendix.b2}

\begin{proof}
At each time $t$, let $v_t = 1$ and point estimation equals to $\frac{1}{2}$. Define $\varepsilon \in [0, \frac{1}{8}]$  to be some parameter relevant to $L$. Consider the following two scenarios: (each with probability $\frac{1}{2}$)
\begin{itemize}
    \item  $\sigma_t$ equals to 0  with probability $p_1 := 1- 2(\varepsilon-\delta)$, and equals to $ \varepsilon$ with probability $1 - p_1$, in which case $m_t$ always takes value $h_t + \varepsilon$.
    \item $\sigma_t$ equals to 0  with probability $p_2 := 1- 2(\varepsilon+\delta)$, and equals to $ \varepsilon$ with probability $1 - p_2$, in which case $m_t$ always takes value $h_t + \varepsilon$.
\end{itemize}
Easy to observe under this construction the expected value of $L$: \begin{align}
    \bar{L} =\sum_{t=1}^T \frac{\varepsilon}{2}\cdot (2(\varepsilon+\delta)+2(\varepsilon-\delta)) = 2\varepsilon^2\cdot T. \notag 
\end{align} 
The above construction also satisfies:
\begin{align}
    & \max \limits_{b \in [0, 1]} R_1(1, b) = R_1\left(1, \frac{1}{2}\right) = \frac{1}{2} - \varepsilon + \delta, \notag  \\
   &  \max \limits_{b \in [0, 1]} R_2(1, b) = R_2\left(1, \frac{1}{2} +\varepsilon\right) = \frac{1}{2} -\varepsilon, \notag \\
   & \max \limits_{b \in [0, 1]} (R_1(1, b) + R_2(1, b)) = R_1\left(1, \frac{1}{2} + \varepsilon\right) + R_2\left(1, \frac{1}{2} + \varepsilon\right)  =2\cdot \left(\frac{1}{2} - \varepsilon\right), \notag 
\end{align} 
where $R_1$ and $R_2$ are expected rewards under the two scenarios. The following steps are similar to previous subsection, for any policy $\pi$,
\begin{align}\label{align.10}
  \mathop{\mathrm{sup}}\limits_{\{m_t, h_t, \sigma_t\}} \text{\rm Reg}(\pi) &\ge  \frac{1}{2} \mathbb{E}_1[\text{\rm Reg}(\pi)] +  \frac{1}{2} \mathbb{E}_2[\text{\rm Reg}(\pi)] \notag \\
  &= \frac{1}{2} \sum_{t=1}^T \left(\mathbb{E}_{P_1^t}\left[\max \limits_{b \in [0, 1]} R_1(1, b) -  R_1(1, b_t)\right] +  \frac{1}{2} \mathbb{E}_{P_2^t}\left[\max \limits_{b \in [0, 1]} R_2(1, b) -  R_2(1, b_t)\right]\right) \notag \\
    & \ge \frac{1}{2}\sum_{t=1}^T \delta \cdot \int \min \{d P_1^t, d P_2^t \} \notag \\
    & \ge  \frac{1}{2}\sum_{t=1}^T \delta\cdot \left(1-\|P_{1}^t- P_{2}^t\|_{\text{TV}}\right) \notag \\
    &\ge \frac{1}{2}T \delta\cdot \left(1-\|P_{1}^T- P_{2}^T\|_{\text{TV}}\right),  
\end{align}
with $P_1^t$ and $P_2^t$ defined the same as (\ref{align.8}). And the KL divergence
\begin{align}
    D_{\text{\rm KL}}(P_{1}^T\|P_{2}^T) &= \sum_{t=1}^T\left(2(\varepsilon - \delta)\cdot \log \frac{\varepsilon - \delta}{\varepsilon + \delta} + (1-2(\varepsilon - \delta))\cdot \log \frac{1-2(\varepsilon - \delta)}{1-2(\varepsilon + \delta)}\right) \notag \\
    &\le \sum_{t=1}^T\left(2(\varepsilon - \delta)\cdot \frac{-2\delta}{\varepsilon+\delta} + (1-2(\varepsilon - \delta))\cdot \frac{4\delta}{1-2(\varepsilon + \delta)}\right) \notag \\
    &\le 4\delta T\cdot\left(-\frac{\varepsilon-\delta}{\varepsilon+\delta}+\frac{1-2\varepsilon+2\delta}{1-2\varepsilon-2\delta}\right) \notag \\
    &=8\delta^2 T \cdot \frac{1}{(\varepsilon+\delta)(1-2\varepsilon-2\delta)}\notag \\
    &\le \frac{16T\cdot \delta^2}{\varepsilon}. \notag 
\end{align}
Taking $\delta = \min\left\{\varepsilon, \frac{1}{4}\sqrt{\frac{\varepsilon}{2T}}\right\}$ and substitute in (\ref{align.10}), we have:
\begin{align}
    \mathop{\mathrm{sup}}\limits_{\{m_t, h_t, \sigma_t\}} \text{\rm Reg}(\pi)  &\ge \frac{1}{4}\min \left\{\varepsilon T, \frac{1}{4\sqrt{2}}\sqrt{T\cdot \varepsilon}\right\}, \notag 
\end{align}
which leads to a lower bound of $\Omega((T\cdot L)^{\frac{1}{4}})$. Note that the construction above requires $\sigma_t$ to be unknown, otherwise one can achieve 0 regret by bidding hint for $\sigma_t = 0$ and bidding hint + $\varepsilon$ for $\sigma = \varepsilon$, which is a technical explanation for the separation in Section~\ref{part.one}.
\end{proof}

\subsection{Proof of Theorem~\ref{thm.varyingvt}.} \label{appendix.thm3}

\begin{proof}
If $L > \left(\sqrt{T}\right)^{\frac{q-1}{q}}$, then $T^{\frac{1}{q+1}} L^{\frac{q}{q+1}} > \sqrt{T}$ and the regret can be lower bounded by $\Omega\left(\sqrt{T}\right)$. So in the following construction, we assume $L \le \left(\sqrt{T}\right)^{\frac{q-1}{q}}$. First we divide time horizon to $\left\lfloor T^{\frac{1}{q+1}} L^{\frac{q}{q+1}}\right\rfloor$ equal parts and  let $\sigma_t$ be the same for all $t$. Construct private values and hints as follows:
For $t = i\cdot\left\lfloor \left(\frac{T}{L}\right)^{\frac{q}{q+1}}\right\rfloor +1$,  $i\cdot \left\lfloor \left(\frac{T}{L}\right)^{\frac{q}{q+1}}\right\rfloor +2$, \ldots, $(i+1)\cdot\left\lfloor \left(\frac{T}{L}\right)^{\frac{q}{q+1}}\right\rfloor$,
\begin{align}
    v_t& = \frac{1}{2}+  \frac{1}{2}\cdot \frac{i}{ T^{\frac{1}{q+1}} L^{\frac{q}{q+1}}}, \notag \\ 
    h_t &= \frac{1}{4} + \frac{i}{4}\cdot \sigma^{\frac{q}{q+1}} , \notag \\
    m_t &= \left\{
\begin{aligned}
&\frac{1}{4} + \frac{i}{4}\cdot \sigma^{\frac{q}{q+1}}, &w.p. \quad 1 - \frac{1}{4}\cdot\left(\sigma^{\frac{q}{q+1}} \pm \delta\right) \\
&\frac{1}{4} + \frac{i+1}{4}\cdot\sigma^{\frac{q}{q+1}}, &w.p.\quad  \frac{1}{4}\cdot\left(\sigma^{\frac{q}{q+1}} \pm \delta \right)
\end{aligned}
\right. \notag
\end{align}
where $i = 0, 1, 2, \ldots, \left\lfloor T^{\frac{1}{q+1}} L^{\frac{q}{q+1}}\right\rfloor - 1$ and $\delta$ is the separation parameter similarly defined in the proof of Theorem 1. Since $L \le \left(\sqrt{T}\right)^{\frac{q-1}{q}}$, we have $$\frac{1}{ T^{\frac{1}{q+1}} L^{\frac{q}{q+1}}} \ge \left(\frac{L}{T}\right)^{\frac{q}{q+1}} = \sigma^{\frac{q}{q+1}},$$ which ensures
  any strategy $\pi$ that bids in $\left[\frac{1}{4} + \frac{i}{4}\cdot \sigma^{\frac{q}{q+1}}, \frac{1}{4} + \frac{i+1}{4}\cdot\sigma^{\frac{q}{q+1}}\right]$ for the $i$-th part belongs to 1-Lipschitz and monotone oracle. Therefore, we can now consider the whole time horizon as $\left\lfloor T^{\frac{1}{q+1}} L^{\frac{q}{q+1}}\right\rfloor$ independent problems, each of which consists of $\left\lfloor \left(\frac{T}{L}\right)^{\frac{q}{q+1}}\right\rfloor$ time steps and has fixed $v_t$.
Substituting  $L_i := \left\lfloor \left(\frac{T}{L}\right)^{\frac{q}{q+1}}\right\rfloor \cdot \sigma$, which is $L$ for the $i$-th subproblem, and applying similar method to the proof of Theorem 1, we can get:
\begin{align}
    \mathop{\mathrm{sup}}\limits_{G} \text{\rm Reg}_i(\pi) &=\Omega\left(\sqrt{ \left(\frac{T}{\left\lfloor T^{\frac{1}{q+1}} L^{\frac{q}{q+1}}\right\rfloor}\right)^{\frac{1}{q+1}}\cdot \left(\frac{L}{\left\lfloor T^{\frac{1}{q+1}} L^{\frac{q}{q+1}}\right\rfloor}\right)^{\frac{q}{q+1}}}\right)\notag \\
    &= \Omega\left(\frac{T^{\frac{1}{q+1}}L^{\frac{q}{q+1}}}{\floor {T^{\frac{1}{q+1}}L^{\frac{q}{q+1}}}}\right) = \Omega(1), \notag
\end{align}
for each independent problem.
Summing over all subproblems leads to the lower bound $\Omega \left(T^{\frac{1}{q+1}} L^{\frac{q}{q+1}}\right)$.
\end{proof}

 \subsection{Proof of Theorem~\ref{thm.thm4}} \label{appendix.thm4}
 \begin{proof}
We prove that even when $L$ takes expected value $\Theta(1)$, the minimax regret is still lower bounded by $\Omega(\sqrt{T})$.
The proof is similar to that of Theorem 3, but by dividing time horizon into $\sqrt{T}$ subproblems. At each time $t$ inside the $i$-th subproblem, the bidder observes $h_t = \frac{1}{4} + \frac{i\cdot\varepsilon}{4}$ (where $\varepsilon = \frac{1}{\sqrt{T}}$). In the construction of the lower bound in Theorem 2, $\sigma_t$ equals to $0$ with probability $1 - \Theta(\varepsilon)$  and equals to $\varepsilon$ with probability $\Theta(\varepsilon)$. Thus,
\begin{align}
    \bar{L} = \mathbb{E}\left[\sum_{t=1}^T \sigma_t\right] = T \cdot \varepsilon^2  = \Theta(1). \notag
\end{align}
Meanwhile, applying similar method to the proof of Theorem 2, we can get a lower bound of $\Omega\left(\sqrt{\sqrt{T}\cdot\frac{1}{\sqrt{T}}}\right) = \Omega(1)$ for each independent problem, leading to the final lower bound  $\Omega\left(\sqrt{T}\right)$.
\end{proof}

\section{Proof of Main Result in Section~\ref{part.two}}
\subsection{Proof of  Theorem~\ref{thm.thm5}}\label{appendix.thm5}
\subsubsection{Proof of Upper Bounds in Theorem~\ref{thm.thm5}}

\begin{proof}

In the following subsection, we provide a way to achieve $O\left(\sqrt{  \log T \cdot T^{\frac{1}{q+1}}\cdot L^{\frac{q}{q+1}} \cdot K}\right)$ regret upper bound. \footnote{The other two are described in Appendix~\ref{appendix.sudocode}.}

\begin{figure}[htbp]
\centering 
\subfigure[Discretize by $x$ axis] 
{
	\begin{minipage}{6cm}
	\centering         
	\includegraphics[scale=0.28]{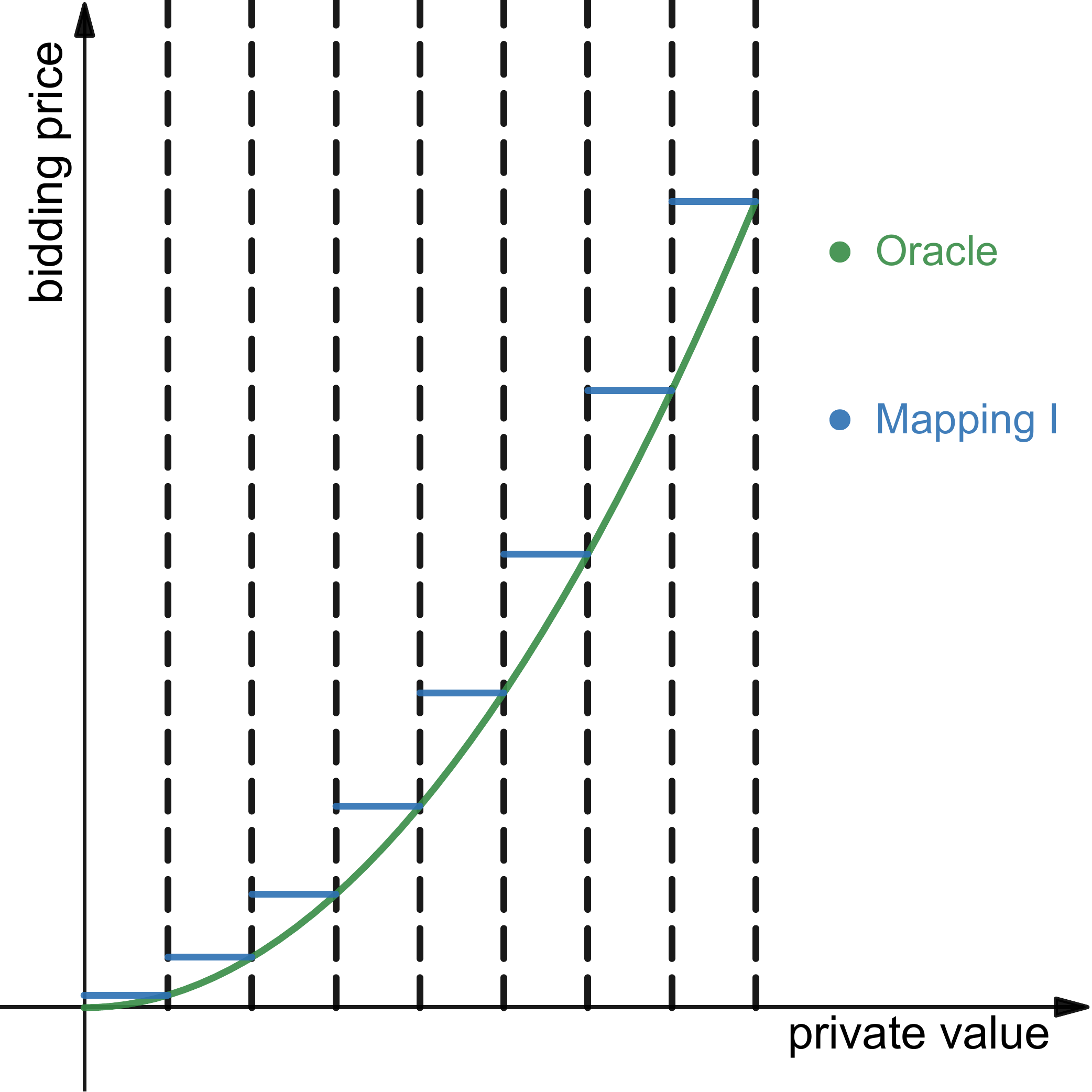}  
	\end{minipage}
}
\subfigure[Discretize by support location] 
{
	\begin{minipage}{6cm}
	\centering      
	\includegraphics[scale=0.28]{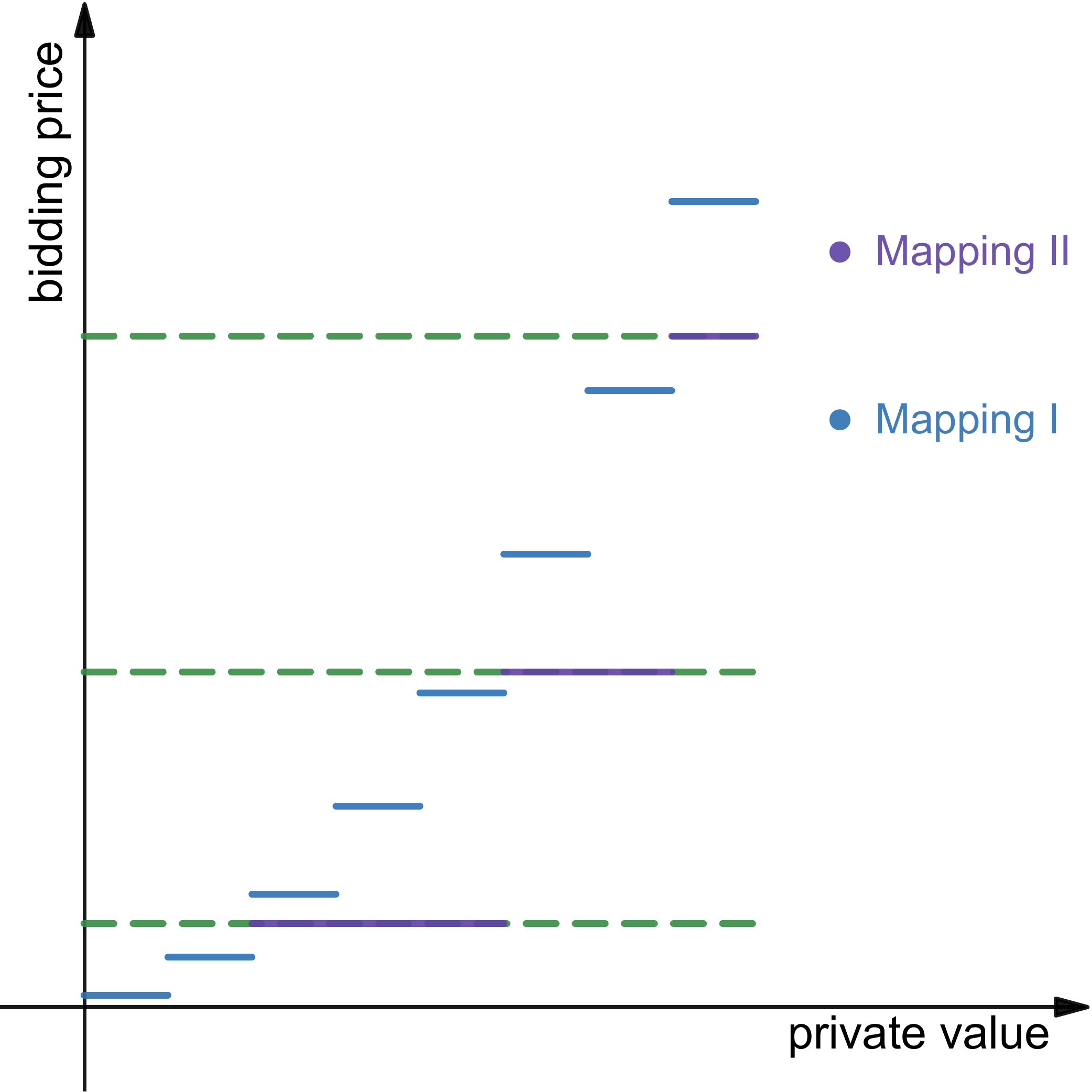}  
	\end{minipage}
}
\caption{Given any 1-Lipschitz and monotone oracle, we first discretize the $x$-axis into $T$ small intervals, changing the oracle to a piecewise constant function that bids the maximum point for each interval in the oracle; Secondly, we map this  piecewise constant function to a piecewise function that only takes support value as bidding price. Easy to verify step 1 leads to $T\cdot O(\frac{1}{T}) = O(1)$ loss, while step 2 leads to a non-negative change to the cumulative reward.  } 
\label{fig:1}  
\end{figure}

Figure \ref{fig:1} shows any function in oracle can be mapped to a piecewise constant function whose value only takes those in the support set, define this mapped function set to be $A$. We prove in the appendix that the number of functions in the converted set A is smaller than $T^K$, then applying the algorithm in Theorem 1's proof directly leads to an upper bound of \footnote{Although in the proof of Theorem 1 we show an upper bound of $O\left(\sqrt{  \log T\cdot T^{\frac{1}{q+1}}\cdot L^{\frac{q}{q+1}}}\right)$, the proof schetch can indeed be applied to any finite set of experts.}
$$O\left(\sqrt{  \log\left(|\text{expert set}|\right)\cdot T^{\frac{1}{q+1}}\cdot L^{\frac{q}{q+1}}}\right) =
O\left(\sqrt{\log T\cdot  T^{\frac{1}{q+1}}\cdot L^{\frac{q}{q+1}}\cdot K}\right).$$

To show set $A$ is small enough, let's first imagine  walking from $(1, 0)$ to $(T, K)$ with each step either to the positive direction of $x$-axis or $y$-axis exactly by 1. There are $T+K-1$ steps in total and one may choose $K$ of them to go up. Now given any function in $A$, suppose at $x=1$ the value equals to the $i$-th support and at $x = T$ the value equals to the $j$-th support, which can be considered as points $(1, i)$ and $(T, j)$, $i, j \in \mathbb{Z}$, $0\le i \le j \le K$. Without loss of monotonicity, we add points $(0, 0)$ and $(T+1, K)$ to the interval-support pairs of this function, i.e. the function takes value of the $i$-th support for the $t$-th interval, $i \in [K]$, $t \in [T]$, iff we pass point $(t, i)$ in the route from $(0, 0)$ to $(T+1, K)$. The set of routes and set $A$ forms a bijection, both have cardinality: \begin{align}
    \binom{T+K-1}{K} = \frac{T+K-1}{K}\cdot \frac{T+K-2}{K-1} \ldots\frac{T}{1} \le T^K. \notag
\end{align}

\end{proof}

\subsubsection{Proof of Lower Bounds in Theorem~\ref{thm.thm5}}
\begin{proof}
Consider the three cases separately:
\begin{itemize}
    \item If $L < \frac{K^{\frac{q+1}{q}}}{T^{\frac{1}{q}}}$, then as in the proof of Theorem 3 we can construct $N = \left\lfloor T^{\frac{1}{q+1}} L^{\frac{q}{q+1}}\right\rfloor$ independent problems since $N < K$ in this case. For each independent problem the lower bound is $\Omega\left(1\right)$, leading to a total lower bound of $\Omega\left(T^{\frac{1}{q+1}}\cdot L^{\frac{q}{q+1}}\right)$.
    \item If $\frac{K^{\frac{q+1}{q}}}{T^{\frac{1}{q}}} \le L \le \frac{T}{K^{\frac{q+1}{q}}}$, we cannot divide into $\left\lfloor T^{\frac{1}{q+1}} L^{\frac{q}{q+1}}\right\rfloor$ subproblems since there are only $K$ values $m_t$ can take. So instead, we divide time horizon into $K$ subproblems:
    
    For $t = i\cdot\left\lfloor \frac{T}{K}\right\rfloor +1$,  $i\cdot\left\lfloor \frac{T}{K}\right\rfloor +2$, \ldots, $(i+1)\cdot\left\lfloor \frac{T}{K}\right\rfloor$,
\begin{align}
    v_t& = \frac{1}{2}+  \frac{1}{2}\cdot \frac{i}{ K}, \notag \\ 
    h_t &= \frac{1}{4} + \frac{i}{4}\cdot \sigma^{\frac{q}{q+1}} , \notag \\
    m_t &= \left\{
\begin{aligned}
&\frac{1}{4} + \frac{i}{4}\cdot \sigma^{\frac{q}{q+1}}, &w.p. \quad 1 - \frac{1}{4}\cdot\left(\sigma^{\frac{q}{q+1}} \pm \delta\right)\\
&\frac{1}{4} + \frac{i+1}{4}\cdot\sigma^{\frac{q}{q+1}}, &w.p.\quad  \frac{1}{4}\cdot\left(\sigma^{\frac{q}{q+1}} \pm \delta \right)
\end{aligned}
\right. \notag
\end{align}
where $i = 0, 1, 2, \ldots, K-1$. Observe that the difference between $v_t$ for adjacent subproblem is $\frac{1}{2}\cdot\frac{1}{K}$ and the difference between bid value for adjacent subproblem is at most $$2\cdot \frac{\sigma^{\frac{q}{q+1}}}{4} = \frac{\sigma^{\frac{q}{q+1}}}{2} = \frac{L^{\frac{q}{q+1}}}{2\cdot T^{\frac{q}{q+1}}} \le \frac{1}{2}\cdot \frac{1}{K}, $$
ensuring the $N = K$ subproblems are indeed independent from each other. Additionally, the separation parameter $\delta$ for each subproblem equals to $
\sqrt{\frac{\sigma^{\frac{q}{q+1}}}{\frac{T}{K}}} = \sqrt{\frac{K\cdot L^{\frac{q}{q+1}}}{T^{\frac{1}{q+1}}}}$, which is smaller than the separation of $m_t$: $\sigma^{\frac{q}{q+1}} = \left(\frac{L}{T}\right)^{\frac{q}{q+1}}$. Thus substituting Theorem 1, finally the lower bound is,
\begin{align}
    K\cdot \Omega\left(\left(\frac{T}{K}\right)^{\frac{1}{q+1}}\cdot \left(\frac{L}{K}\right)^{\frac{q}{q+1}}\right) = \Omega\left(\sqrt{K\cdot T^{\frac{1}{q+1}}\cdot L^{\frac{q}{q+1}}}\right). \notag
\end{align}
    \item If $L > \frac{T}{K^{\frac{q+1}{q}}}$, a traditional lower bound gives $\Omega\left(\sqrt{T}\right)$.
\end{itemize}
\end{proof}

\subsection{Proof of Theorem~\ref{thm.thm6}}\label{appendix.algo2}

\begin{proof}
Let the learning rate for the upper level $\eta_{t,2} = \min \left\{\frac{1}{4}, \sqrt{\frac{\log 3}{\floor{\sum_{s=1}^{t-1} \sigma_s^{\frac{q}{q+1}}}+1}}\right\}$ and apply similar analysis as in Appendix~\ref{appendix.a1}:
\begin{align}\label{align.11new}
  \sum_{t=1}^T \mathbb{E}[X_t] &\ge \max \limits_{i\in \{f, g, h\}}\sum_{t=1}^T r_{t,i} -  2 \cdot \left(\frac{\log 3}{\eta_{T,2}} + 2\sum_{t=1}^T\eta_{t,2} \cdot 2\cdot \sigma_t^{\frac{q}{q+1}} \right) \notag \\
  & = \max \limits_{i\in \{f, g, h\}}\sum_{t=1}^T r_{t,i} -  2 \cdot \left(\sqrt{  \log 3 \cdot \sum_{t=1}^T \sigma_t^{\frac{q}{q+1}}} + 4\sqrt{\log 3}\cdot\sum_{t=1}^T\frac{\sigma_t^{\frac{q}{q+1}}}{\sqrt{\floor{\sum_{s=1}^t \sigma_s^{\frac{q}{q+1}}}+1}} \right) \notag \\
  &  \overset{\text{(a)}}{\ge} \max \limits_{i\in \{f, g, h\}}\sum_{t=1}^T r_{t,i} -  2 \cdot \left(\sqrt{  \log 3 \cdot \sum_{t=1}^T \sigma_t^{\frac{q}{q+1}}} + 8\sqrt{  \log 3 \cdot \sum_{t=1}^T \sigma_t^{\frac{q}{q+1}}} \right) \notag \\
  & = \max \limits_{i\in \{f, g, h\}}\sum_{t=1}^T r_{t,i} -  18 \cdot \sqrt{  \log 3 \cdot \sum_{t=1}^T \sigma_t^{\frac{q}{q+1}}},
\end{align}
where $\sum_{t=1}^T \mathbb{E}[X_t]$ is the expected total reward by running Algorithm 1, with expectation taken over both policy randomness and possible $m_t$ sequences. (a) can be considered as taking integral of function $f(x) = \frac{1}{\sqrt{x}}$, but with another piecewise function smaller than it instead. And applying similar method to the lower level of the first node we have:
\begin{align}\label{align.12new}
\sum_{t=1}^T r_{t, f} \ge \max \limits_{a\in [T^K]}\sum_{t=1}^T r_{t,a} - 18\cdot \sqrt{ K \log T \cdot \sum_{t=1}^T \sigma_t^{\frac{q}{q+1}}}.\end{align}
Combining (\ref{align.11new}) and (\ref{align.12new}) and the regret upper bound of ChEW algorithm and choosing hint expert:
\begin{align}
    \sum_{t=1}^T \mathbb{E}[X_t] & \ge  \max \limits_{i\in \{f, g, h\}}\left( \max \limits_{a\in [T^K]}\sum_{t=1}^T r_{t,a} -  \text{\rm Reg}(i)\right) - 18\cdot \sqrt{\log 3 \cdot \sum_{t=1}^T \sigma_t^{\frac{q}{q+1}}} \notag \\
    &=  \max \limits_{a\in [T^K]}\sum_{t=1}^T r_{t,a} - \min \left\{ 18\cdot \sqrt{ K \log T \cdot \sum_{t=1}^T \sigma_t^{\frac{q}{q+1}}}, 2\cdot \sum_{t=1}^T \sigma_t^{\frac{q}{q+1}}, C\cdot \sqrt{T}\right\} - 18\cdot \sqrt{ \log 3 \cdot \sum_{t=1}^T \sigma_t^{\frac{q}{q+1}}}, \notag
\end{align}
where $C$ is a constant number. Therefore, we have:
\begin{align}
    \text{\rm Reg}(\pi)& = O\left(\min\left\{\sqrt{  \log T \cdot \sum_{t=1}^T \sigma_t^{\frac{q}{q+1}} \cdot K}, \sum_{t=1}^T \sigma_t^{\frac{q}{q+1}}, \sqrt{T}\right\}\right) + O\left(\sqrt{\log 3 \cdot \sum_{t=1}^T \sigma_t^{\frac{q}{q+1}}}\right) \notag  \\
   &\overset{\text{(b)}}{=} O\left(\min\left\{\sqrt{  \log T \cdot \sum_{t=1}^T \sigma_t^{\frac{q}{q+1}} \cdot K}, \sum_{t=1}^T \sigma_t^{\frac{q}{q+1}}, \sqrt{T}\right\}\right), \notag 
\end{align}
while (b) holds since  $\sum_{t=1}^T \sigma_t^{\frac{q}{q+1}}> L > 1$.
\end{proof}

\subsection{Proof of Theorem~\ref{thm.thm7}}
\subsubsection{Proof of Upper Bound in Theorem~\ref{thm.thm7}}
\begin{proof}
Instead of one single hint expert in Algorithm~\ref{algo.1}, construct $T$ hint experts, with each one bidding a constant gap over $h_t$, i.e. with the first hint expert bidding $h_t+\frac{1}{T}$ for $t =1, \ldots, T$; the second hint expert bidding $h_t + \frac{2}{T}$ for $t = 1, \ldots, T$; etc. The upper layer then consists of $T$ hint experts and two super nodes, representing ChEW algorithm ($g$) and modified Algorithm~\ref{algo.thm1} ($f$). The lower layer of $f$ consists of $T^K$ base experts (constructed as in Appendix~\ref{appendix.thm5}) and $T$ hint experts.
Let the learning rate for the upper level $\eta_{2} = \min \left\{\frac{1}{4}, \sqrt{\frac{\log (T+2)}{\sqrt{TL}}}\right\}$, 
\begin{align}\label{align.11}
  \sum_{t=1}^T \mathbb{E}[X_t] &\ge \max \limits_{i\in \{f, g, h\}}\sum_{t=1}^T r_{t,i} -  2 \cdot \left(\frac{\log (T+2)}{\eta_{2}} + 4\eta_{2} \sqrt{LT} \right) \notag \\
  & = \max \limits_{i\in \{f, g, h\}}\sum_{t=1}^T r_{t,i} -  10 \cdot \sqrt{  \log (T+2) \cdot \sqrt{LT}},
\end{align}
And applying similar method to super node $f$:
\begin{align}\label{align.12}
\sum_{t=1}^T r_{t, f} \ge \max \limits_{a\in [T^K]}\sum_{t=1}^T r_{t,a} - 10\cdot \sqrt{ K \log T \cdot  \sqrt{TL}}.\end{align}
Combining (\ref{align.11}) and (\ref{align.12}),
\begin{align}
    \sum_{t=1}^T \mathbb{E}[X_t] & \ge  \max \limits_{i\in \{f, g, h\}}\left( \max \limits_{a\in [T^K]}\sum_{t=1}^T r_{t,a} -  \text{\rm Reg}(i)\right) - 20\cdot \sqrt{\log T \cdot \sqrt{TL}} \notag \\
    &=  \max \limits_{a\in [T^K]}\sum_{t=1}^T r_{t,a} - \min \left\{ 10\cdot \sqrt{ K \log T \cdot \sqrt{TL}}, 2\cdot \sqrt{TL}, C\cdot \sqrt{T}\right\} - 20\cdot \sqrt{ \log T \cdot \sqrt{TL}}, \notag
\end{align}
where $C$ is a constant number. Therefore, we have:
\begin{align}
    \text{\rm Reg}(\pi)& = O\left(\min\left\{\sqrt{ K \log T \cdot \sqrt{TL}},  \sqrt{T}\right\}\right) + O\left(\sqrt{\log T \cdot \sqrt{TL}}\right) \notag  \\
    &= O\left(\min\left\{\sqrt{ K \log T \cdot \sqrt{TL}},  \sqrt{T\log T}\right\}\right). \notag
\end{align}
\end{proof}

\subsubsection{Proof of Lower Bound in Theorem~\ref{thm.thm7}}
The following is similar to proof of lower bound in Theorem~\ref{thm.thm5}.
\begin{proof}
\begin{itemize}
    \item If $L >\frac{T}{K^2}$, as in the proof of Theorem~\ref{thm.thm4} construct $N_0 = \left\lfloor\sqrt{\frac{T}{L}}\right\rfloor < K$ independent sub-problems, while for each sub-problem \begin{align}
        L' = \frac{L}{\sqrt{T/L}} = \sqrt{\frac{L^3}{T}},\quad T'= \frac{T}{\sqrt{T/L}} = \sqrt{TL}, \notag
    \end{align} and for each sub-problem regret is lower bounded by $\Omega\left(\left(\sqrt{LT}\cdot \sqrt{\frac{L^3}{T}}\right)^{1/4}\right)$, leading to a total lower bound of $\Omega\left(\left(\sqrt{LT}\cdot \sqrt{\frac{L^3}{T}}\right)^{1/4}\cdot \sqrt{\frac{T}{L}}\right) = \Omega(\sqrt{T)}$.
    \item If $L \le\frac{T}{K^2}$, it is not feasible to construct $N_0$ independent sub-problems as the optimal bidding value can not take $N_0 > K$ values. Instead construct $K$ independent problems, with the separation parameter (see Appendix~\ref{appendix.b2}): $\delta = \sqrt{\frac{L}{T}}\cdot \frac{K}{T} < \frac{1}{T}$, leading to a total regret lower bound of $\Omega\left(\sqrt{\sqrt{\frac{T}{K}\cdot \frac{L}{K}}}\cdot K = \Omega\left(\sqrt{K\cdot \sqrt{TL}}\right)\right).$
\end{itemize}
\end{proof}
\section{Experimental Details}

\subsection{Description of Experiment 1 in Section~\ref{sec.exp}}
Divide the whole range of private value to $D$ bins, each of which contains $v_t$'s that are close to each other. As long as the bidder observes $v_t$ at time $t$, we reduce the problem to the bin focusing on the data points with private values close to $v_t$. Then each bin itself forms 
a sub-problem described in Section~\ref{part.one}. Experiment 1 only serves as an illustration of the effect by hints. The role of hints is threefolds:
\begin{itemize}
    \item We use hint to help allocating data to different bins. Instead of binning only by private values, we use hint as a side information and conduct binning also based on it. The total number of bins is $M_1 \cdot M_2$, while $M_1$ is the number of discretization for $v_t$ and $M_2$ is the number of discretization for hints. As for the result on empirical data, we observe  $M_2 = 4$ already leads to rather good performance.
    \item We use hint to calculate the estimation of instantaneous reward for any given bid $b_t'$ under the assumption that $m_t = b_t$: 
    $r'_{t, a} := r(b_t; h_t, v_t),$ where $b_t$ is the bid at time $t$ according to oracle $a$.
    Then we add this estimated reward to each experts' reward history while sampling among these experts:
    \begin{align}
        p_{t, a} = \frac{\text{exp}\left(\eta_t \cdot \left(\sum_{s=1}^{t-1} r_{s, a} + r'_{t, a}\right)\right)}{\sum_{a' \in \mathcal{F}} \text{exp}\left(\eta_t \cdot \left(\sum_{s=1}^{t-1} r_{s, a'} + r'_{t, a'}\right)\right)}, t = 2, 3, \cdots, T. \notag
    \end{align}
    And if $\sigma_t$ is also observed, we define $r'_{t, a} := r(b_t; h_t + c_1\cdot \sigma_t, v_t)$ instead, where $c_1$ is a hyper-parameter to be tuned.
    \item We include a set of hint experts $$b_t(a_i):= h_t +\sigma_t^{\Delta_i}, \quad i = 1, 2, \cdots, k,$$ which is close to a combination of algorithms for whether knowing the error, since for real datasets $q$ is often not observed. 
\end{itemize}

The results in Figure~\ref{fig:three} shows the improvement by incorporating hint on other two datasets. The results implies that on datasets whose hint has rather small error, e.g. on dataset 1  bidding hint itself already beats simple online learning algorithm, the improvement by hint is more significant. Namely, 4.38\% on dataset 1 with more accurate hint and 3.54\% on dataset 2 whose hint is not so good.
\begin{figure}[htbp]
\centering 
\subfigure[Results on Dataset 1] 
{

	\centering         
	\includegraphics[scale=0.75]{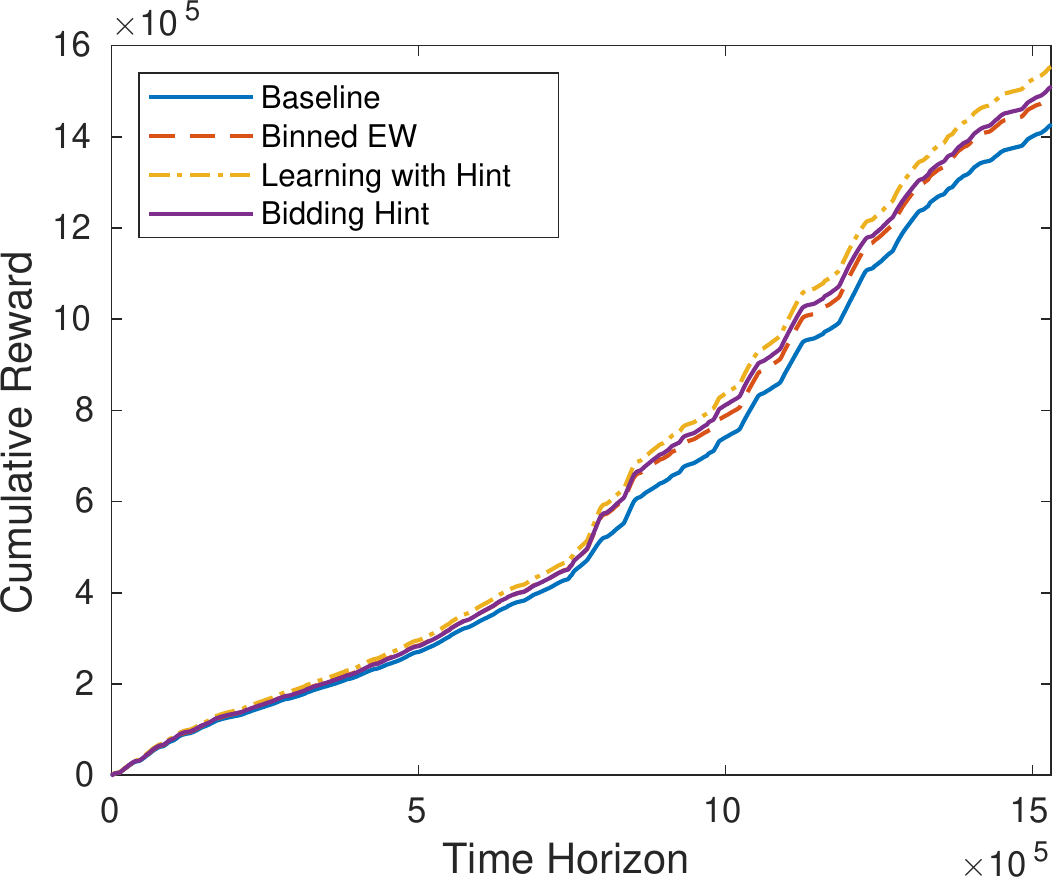}
	
}
\subfigure[Results on Dataset 2] 
{

	\centering      
	\includegraphics[scale=0.75]{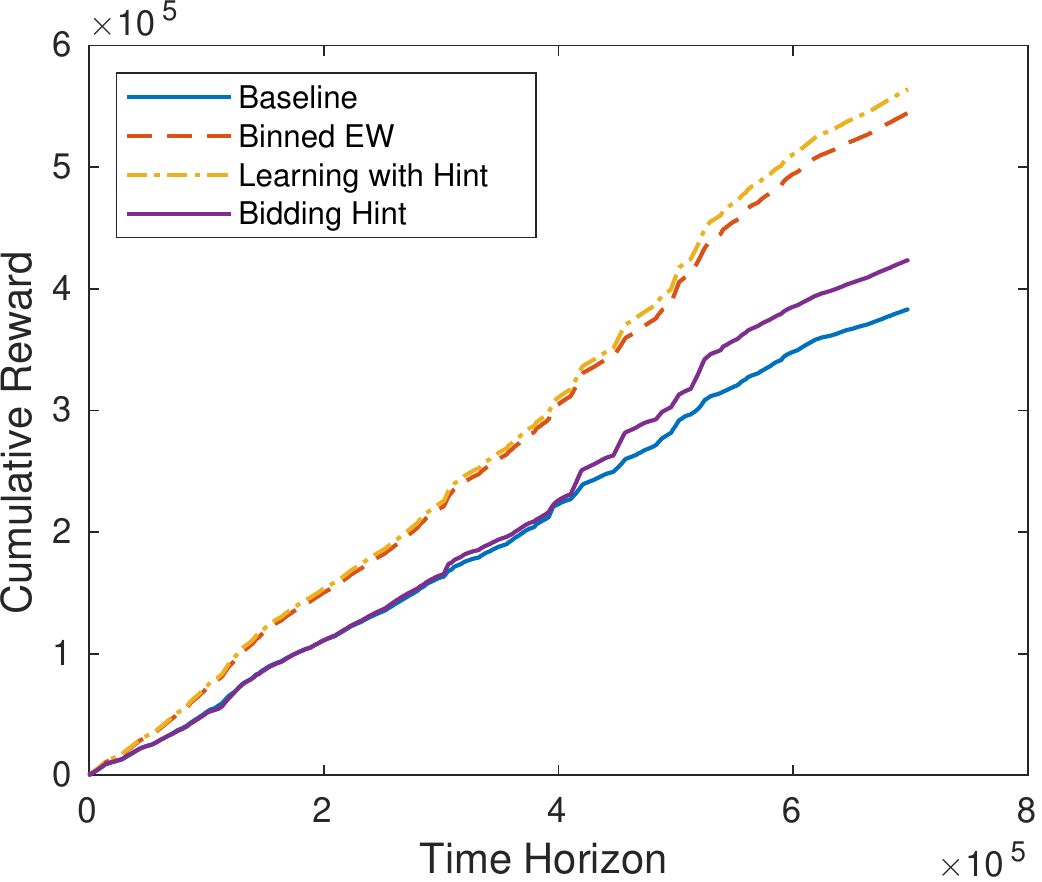}

}
\caption{Cumulative rewards as a function of time. The dashdot lines stands for incorporating hint into exponential weighting, and the purple solid lines are directly bidding hint. The dotted lines represent binned exponential algorithm.} 
\label{fig:three}  
\end{figure}

\subsection{Polynomial Algorithm in Section~\ref{sec.exp}}

\begin{algorithm}[htbp]
\SetAlgoLined
\textbf{Inputs:} Time horizon $T$; support size $K$; \\
\textbf{Initialization:}
 $\text{Reward}_{T, K, T}\leftarrow 0$; 
 P $\leftarrow 0$;\\
 \For{$t = 1, 2, \ldots, T$}{
 \textit{\% Calculate Sum\_Forward$\&$Sum\_Backward Matrix\\}
 $\text{Sum\_Forward}_{T, K, T} \leftarrow 1$; 
 $\text{Sum\_Backward}_{T, K, T} \leftarrow 1$; \\
 \For{$i = 1, 2, \ldots T$}{\For{$j = 1, 2, \ldots, T$}{$\text{Sum\_Forward}_{i, 1, j}\leftarrow \text{Sum\_Forward}_{i-1, 1, j} \cdot \text{exp}( \eta_t\cdot\text{Reward}_{i, 1, j})$;\\
$\text{Sum\_Backward}_{i, K, j}\leftarrow  \text{Sum\_Backward}_{i+1, K, j} \cdot \text{exp}( \eta_t\cdot\text{Reward}_{i, K, j})$;\\
\For{$k = 2, 3,\ldots K-1$}{\begin{align}\text{Sum\_Forward}_{i, k, j} \leftarrow & \sum_{v = 1}^{j-1} \Big(\text{Sum\_Forward}_{i-1, k-1, v} \cdot \text{exp}( \eta_t\cdot\text{Reward}_{i, k, j}) \Big)\notag \\& + \text{Sum\_Forward}_{i-1, k, j} \cdot \text{exp}( \eta_t\cdot\text{Reward}_{i, k, j});\notag \\
 \text{Sum\_Backward}_{i, k, j} \leftarrow & \sum_{v = j+1}^{T} \Big(\text{Sum\_Backward}_{i+1, k+1, v} \cdot \text{exp}( \eta_t\cdot\text{Reward}_{i, k, j}) \Big)\notag \\& + \text{Sum\_Backward}_{i+1, k, j} \cdot \text{exp}( \eta_t\cdot\text{Reward}_{i, k, j});\notag
 \end{align}}
 }
 }
 \textit{\% Calculate Probability\\}
 $i \leftarrow \lfloor v_t\cdot T \rfloor $; \\
 \For{$j = 1, 2, \ldots T$}{
 \begin{align}
   \text{P}_j \leftarrow \sum_{k = 1, 2, \ldots K} \Big(&\Big(\text{Sum\_Forward}_{i-1, k, j} + \sum_{v = 1}^{j-1} \text{Sum\_Forward}_{i-1, k-1, v}\Big)\cdot  \text{exp}( \eta_t\cdot\text{Reward}_{i, k, j}) \notag \\&\cdot  \Big(\text{Sum\_Backward}_{i+1, k, j} + \sum_{v = j+1}^{T} \text{Sum\_Backward}_{i+1, k+1, v}\Big)\Big);\notag
 \end{align}
 }
 \For{$k = 1, 2, \ldots K$}{
$\text{P}_{\lfloor h_t\cdot T \rfloor} \leftarrow \text{P}_{\lfloor h_t\cdot T \rfloor} + \text{exp}( \eta_t\cdot\text{RH})$;\\
 }
Sample  $b_t\sim (P / \sum(P))$; \\
 \textit{\% Update Reward Matrix\\}
 \For{$k = 1, 2, \ldots, K$}{\For{$j = 1, 2, \ldots T$}{\uIf{$m_t \le j/T$}{$\text{Reward}_{i, k, j} \leftarrow \text{Reward}_{i, k, j} + (v_t - j/T)$;}}}
 $\text{RH} \leftarrow \text{RH} + r(h_t; v_t, m_t)$;\\
}
\caption{DP algorithm without knowing support locations\label{algo.2}} 
\end{algorithm}

Consider any 1-Lipschitz \&  monotone oracle $f$, since support size is finite, $f$ can be mapped to a discontinuous function $f'$ with $O(1)$ loss, which can be further represented by a series of interval-support pair:
\begin{align}
    \left(0, \frac{1}{D}\right] \leftrightarrow s_{i_1}, \quad
    \left(\frac{1}{D}, \frac{2}{D}\right] \leftrightarrow s_{i_2}, \quad
   \ldots, \quad 
    \left(\frac{D-1}{D}, 1\right] \leftrightarrow s_{i_D}, \notag
\end{align}
where $0\le s_1 \le s_2 \le s_3 \le \cdots \le s_K\le 1$ are the locations of supports in increasing order and $0\le i_1\le i_2 \le \cdots \le i_D \le K$, $i_1, i_2, \cdots, i_D \in \mathbb{Z}$. 
The main idea is to record the cumulative reward for all possible interval-support tuples and use dynamic programming to calculate total reward for some expert sets instead of  keeping track of all $T^K$ experts.

Reward[D][K][D]: 
         The first two dimensions represent interval: $[d][k]: (d/D, (d+1)/D]\leftrightarrow s_k$.
         The third dimension represent the bidding, with steply update \begin{align}
            \text{Reward}_{i, k, j} \leftarrow \text{Reward}_{i, k, j} + (v_t - j/D) \notag
        \end{align}
     
Then we use dynamic programming to calculate the sum of the rewards for several continuous intervals, instead of  keeping track of all $T^K$ experts.

Sum\_Forward[D][K][D]: 
        Forward DP recording array, representing combined intervals: $[d][K]: (0, d/D] \leftrightarrow \{1, \ldots, s_k\}$ and the third dimension represents bidding for the last interval: $(d/D, (d+1)/D]$. The update calculation is carried out per step before choosing an action.
        
Sum\_Backward[D][K][D]: 
        Backward DP recording array, representing combined intervals: $[d][K]: ((d+1)/D, 1] \leftrightarrow \{s_k+1, \ldots, K\}$ and the third dimension represents bidding for the first interval: $((d+1)/D, (d+2)/D]$. The update calculation is carried out per step before choosing an action.
        
Combining the results of Sum\_Forward and Sum\_Backward, we can calculate reward history for a subset of the $T^K$ experts, which is the only needed quantity for calculating probability in exponential weighting instead of keeping record with an exponential size.

\end{document}